\newtheorem{definition}{Definition}
\newtheorem{theorem}{Theorem}
\newcommand{\SGR}{\ensuremath{\mathsf{SGR}}\xspace}
\newcommand{\ASG}{\ensuremath{\mathsf{ASG}}\xspace}
\newcommand{\mypara}[1]{\vspace*{1pt}\noindent{\textbf{#1}}\xspace}
\newcommand{\mysubpara}[1]{\vspace{1pt}\noindent{\textit{#1}}\xspace}
\begin{document}
 
\title{On the Robustness of LDP Protocols for Numerical Attributes under Data Poisoning Attacks}

\author{\IEEEauthorblockN{Xiaoguang Li}
\IEEEauthorblockA{Xidian University, Purdue University\\
xg\_li@outlook.com}
\and
\IEEEauthorblockN{Zitao Li}
\IEEEauthorblockA{Alibaba Group (U.S.) Inc.\\
zitao.l@alibaba-inc.com}
\and
\IEEEauthorblockN{Ninghui Li}
\IEEEauthorblockA{Purdue University\\
ninghui@purdue.edu}
\and
\IEEEauthorblockN{Wenhai Sun}
\IEEEauthorblockA{Purdue University\\
whsun@purdue.edu}
}



\maketitle

\begin{abstract}
Recent studies reveal that local differential privacy (LDP) protocols are vulnerable to data poisoning attacks where an attacker can manipulate the final estimate on the server by leveraging the characteristics of LDP and sending carefully crafted data from a small fraction of controlled local clients. This vulnerability raises concerns regarding the robustness and reliability of LDP in hostile environments. 

In this paper, we conduct a systematic investigation of the robustness of state-of-the-art LDP protocols for numerical attributes, i.e.,  categorical frequency oracles (CFOs) with binning and consistency, and distribution reconstruction. We evaluate protocol robustness through an attack-driven approach and propose new metrics for cross-protocol attack gain measurement. The results indicate that Square Wave and CFO-based protocols in the \textit{Server} setting are more robust against the attack compared to the CFO-based protocols in the \textit{User} setting. Our evaluation also unfolds new relationships between LDP security and its inherent design choices. We found that the hash domain size in local-hashing-based LDP has a profound impact on protocol robustness beyond the well-known effect on utility. Further, we propose a \textit{zero-shot attack detection} by leveraging the rich reconstructed distribution information. The experiment show that our detection significantly improves the existing methods and effectively identifies data manipulation in challenging scenarios. 
\end{abstract}

\section{Introduction}

Local differential privacy is the \textit{de facto} standard for privacy-protective data collection and statistical estimates \cite{duchi2013local, wang2017locally, li2020estimating, imola2021locally} against untrusted data curators. It has been deployed by many tech companies, such as Google \cite{erlingsson2014rappor}, Apple \cite{apple}, and Microsoft \cite{ding2017collecting}. Recent studies raise security concerns about LDP since malicious local data providers can report bogus data and take advantage of the LDP design to effectively skew the estimates on the server. This data poisoning attack has been confirmed by prior works in various statistic tasks, such as frequency and heavy hitter \cite{cao2021data, cheu2021manipulation}, key-value data \cite{wu2021poisoning}, mean and variance \cite{li2022fine}. These studies focused on attack exploration, which is important for us to understand the attack strategies and their impacts on the adoption of LDP in practice. However, there is still a deficiency in understanding the robustness and reliability of LDP with various design principles and the defensive implications due to their algorithmic differences. Specifically, we aim to answer the following research questions in this paper.

\noindent\textbf{RQ-1.} \mysubpara{How can we translate attack effectiveness to LDP robustness given diverse LDP designs to enable meaningful comparison?} We would like to understand the principle of various attack strategies and propose unified metrics to quantify the attack strength for the foundation of fair LDP robustness comparison.

\noindent\textbf{RQ-2.} \mysubpara{How robust are the compared LDP protocols against the data poisoning attack?}  One protocol is less robust than another if, under the same setting, the attack is more effective in influencing the estimated statistical quantity. Currently, LDP protocols are often compared based on versatility (i.e., types of statistics that can be estimated) and estimation accuracy/utility; we argue that robustness adds another critical dimension and would generate new insights into security-aware LDP design. 

\noindent\textbf{RQ-3.} \mysubpara{Can we design an effective detection function with minimum information for existing vulnerable LDP protocols to elevate their attack resilience?} A unified solution facilitates adoption and standardization of LDP protocols but is also challenging since the data ground truth is unknown and clients in different protocols report distinct information.

In this paper, we look into the numerical domain, which includes various applications utilizing ordinal or numerical data, e.g., income, age, health assessment data, etc. We focus on the distribution estimation as it is a primary statistic from which other statistical properties can be computed without additional privacy costs. Though the attack design is an essential part of our evaluation framework, prior research on attacking statistical sub-tasks (e.g. mean/variance) \cite{li2022fine, cheu2021manipulation} provides little insight regarding vulnerability exploit due to different estimation tasks and attack goals. 

In particular, we look at two types of distribution estimation techniques: \textit{categorical frequency oracles with binning and consistency}, and \textit{direct distribution recovery}. The protocols in the first category bin the numerical attributes and then apply CFOs to estimate the histogram. They then apply post-processing to force the histogram to be consistent with natural distributions. In this category, we study four state-of-the-art CFOs, i.e., \textbf{HST} \cite{cheu2021manipulation}, \textbf{GRR}, \textbf{OUE} and \textbf{OLH} \cite{wang2017locally}. Depending on who (either the server or the user) selects the hash functions, we further look at HST and OLH in the \textit{Server} setting and \textit{User} setting respectively. For the direct distribution recovery, we study a representative mechanism, \textbf{SW}~\cite{li2020estimating}. In SW, the server considers the ordinal nature of the numerical domain and uses Expectation Maximization with Smoothing to reconstruct the distribution.

We propose an attack-driven evaluation framework to measure the robustness of each attacked protocol. To this end, we present a suite of \textbf{Distribution-Shift Attacks} in which the attacker aims to shift the estimated distribution to the right-most end.\footnote{Our attacks can be adapted to other regions.} We propose new metrics to quantitively compare robustness across different LDP protocols.  The first is \textbf{Absolute Shift Gain} (\ASG) inspired by Wasserstein distance to quantify the extent to which the distribution is shifted towards the right side. Smaller \ASG indicates better attack resilience. We also look at \textbf{Shift Gain Ratio} (\SGR), the ratio of \ASG of the proposed attack to that of a baseline attack (i.e., sending the maximum input value $x_{\max}$ to skew the distribution).  The baseline attack is indistinguishable from the case where the true inputs of a group of users are also $x_{\max}$, and thus its impact is the same on any LDP mechanism that produces an unbiased estimation. Therefore, \SGR helps us evaluate the protocol robustness using the baseline attack as a reference regardless of the attack background, such as the number of malicious users and the ground truth distribution. $\SGR > 1$ indicates the attack outperforms the baseline. The larger the metric is, the less robust the underlying LDP protocol is.

\textbf{Results on Robustness.} First, the robustness of the studied protocols varies under the attack. For the attack strategy that maximizes the frequency of the right-most bin in the domain, SW is more robust than CFO-based mechanisms because of the smoothing step of SW. To further reveal the more vulnerable sides of SW, we allow the attacker to report values sampled in a higher-end range. We observe that SW and CFOs in the \textit{Server} setting consistently outperform CFOs in the \textit{User} setting. Our experiment reports enhanced robustness of all protocols with an increasing privacy budget. The attack effectiveness will degrade to the baseline with a reasonably large $\epsilon$ (e.g., 1).
In addition, our study uncovers a new correlation between protocol design and LDP security beyond prior findings that only privacy preference affects the LDP robustness \cite{cao2021data, li2022fine}. Our evaluation shows a trade-off between the security and hash domain size of local-hashing-based LDP, i.e., the larger the hash domain size, the less robust the protocol.

\textbf{Detection Exploration.} To help vulnerable protocols restore the attack resilience, we design a \textit{zero-shot attack detection} by analyzing the distribution of reported values. Our detection provides a unified solution that accommodates different LDP protocols. It demonstrates significant performance improvement compared to the state-of-the-art detection \cite{cao2021data}. 
Since the fake values are crafted without following the LDP protocol, the main idea of our detection is to capture the difference between the randomness of fake values and the LDP perturbation. Specifically, we synthesize the data based on the estimated distribution and push them through the LDP perturbations. Since the perturbed results are not polluted, they are expected to carry strong LDP randomness and thus can be used as a benchmark of ``no attack". The noisy report then can be considered unpolluted when it is statistically close to this benchmark.
To evaluate detection performance, we plot the Receiver Operating Characteristic (ROC) curve and measure the Area Under the Curve (AUC), where a larger AUC indicates a better detection result. The experimental results on both real-world and synthesized datasets show that our detection substantially outperforms the state-of-the-art \cite{cao2021data}. In contrast to the existing method that fails to identify the attack in most cases, our detection demonstrates superior performance, especially on SW, OUE, and CFO-based mechanisms under \textit{User} setting. Specifically, the AUC of our detection constantly remains over 0.92 with the attacker only controlling 5\% of the total clients. When the attack appears to evade our detection, its efficacy also diminishes quickly to the baseline, indicating that it is no longer a serious security concern. 
We summarize our contributions below.

\begin{enumerate}[leftmargin=*, topsep=0pt, parsep=0pt]

    \item We systematically investigate the robustness of the LDP protocols for numerical data against data poisoning attacks. The studied protocols consist of a wide array of state-of-the-art distribution estimation mechanisms to show the current threat landscape and further reveal its profound impact.
    
    \item To evaluate the robustness of the protocols, we look into the design of effective attacks and provide new metrics to measure LDP robustness informed by attack strength regardless of diverse design principles of underlying LDP protocols.
   
    \item We both theoretically and empirically evaluate and compare the robustness of the studied LDP protocols. We found that these protocols are not equally resilient against the threat. The CFOs under \textit{Server} setting and SW offer better security. We also uncover new factors in LDP design that influence protocol robustness other than the well-known privacy-security relationships.
    
    \item We propose a novel detection method to identify data manipulation. Our method leverages the distribution properties of the reported data without knowing the ground truth information to detect attacks that introduce statistically non-negligible bias to the distribution. The experiment shows that our method consistently outperforms the existing ones in many challenging scenarios.  
\end{enumerate}

\section{Background and Related Work} \label{sec:background}
\subsection{Local Differential Privacy}
LDP considers the setting that there are $n$ local users and a remote untrusted data collector.  
Each user possesses private data $x \in \mathcal{D}$, which is of interest to the data collector. 
To protect privacy, each user randomly perturbs $x$ with an algorithm $\Psi$, and only reports the perturbed data $\Psi(x)$ to the data collector. The algorithm $\Psi$ provides LDP protection if and only if it satisfies the following definition.
\begin{definition}[$\epsilon$-Local Differential Privacy \cite{duchi2013local}]
    An algorithm $\Psi(\cdot): \mathcal{D} \rightarrow \hat{\mathcal{D}}$ satisfies $\epsilon$-LDP if and only if for any $x_i, x_j \in \mathcal{D}$ and any $t \in \hat{\mathcal{D}}$, the following inequality holds,
    \begin{align*}
        \Pr[\Psi(x_i) = t] \leq e^\epsilon \Pr[\Psi(x_j) = t] .
    \end{align*}
\end{definition}
A smaller $\epsilon$, referred to as the \textit{privacy budget}, corresponds to a more tightened privacy level but also a lower data utility.

\subsection{CFO with binning and consistency}

We first introduce the CFOs used in this paper and then discuss the post-processing methods for consistency. 

\mypara{Categorical frequency oracles.} 
We assume the protocol discretizes the numerical domain [0, 1] into $m_o$ bins, and denote the index of the bin that contains the user's private value $x$ as $x_b$ for convenience.

\mysubpara{Generalized Randomized Response (GRR) \cite{wang2017locally}.}
GRR directly perturbs each input $x_b$ by keeping $x_b$ unchanged with probability $p = \frac{e^\epsilon}{e^\epsilon + m_o - 1}$ and changing it to another index with probability $q = \frac{1}{e^\epsilon + m_o - 1}$. The frequency of each bin $B_i (\forall i \in [m_o])$ is estimated by aggregation function $\Phi_{\mathrm{GRR}}(B_i) = \frac{\sum_{j=1}^{n} \mathbb{I}_{[i]}(y^{(j)}) - nq}{n(p-q)}$, where $y^{(j)}$ is the $j$-th user's report and $\mathbb{I}_{[i]}(y^{(j)})$ is the indicator function equal to 1 if $y^{(j)} = i$ and 0 otherwise.

\mysubpara{Optimal Unary Encoding (OUE) \cite{wang2017locally}.}
OUE is one of the optimal categorical frequency oracles that attains the theoretical lower bound of the LDP protocol $L_2$ error.
It first encodes each input $x_b$ as a one-hot vector $\bm{v} = [0, ..., 1, .., 0]$, where all elements are $0$ except for the element at position $x_b$. 
Then it flips the bits in $\bm{v}$ to get $\hat{\bm{v}}$ as follows: if the bit is 1, it is flipped to 0 with probability $\frac{1}{2}$; otherwise a bit 0 is flipped to 1 with probability $\frac{1}{e^\epsilon + 1}$.

The frequency of each bin $B_i (\forall i \in [m_o])$ is estimated by aggregation function $\Phi_{\mathrm{OUE}}(B_i)$. With the $j$-th user's report denoted as $\hat{\bm{v}}^{(j)}$, $\Phi_{\mathrm{OUE}}(B_i) = \frac{\sum_{j=1}^{n} \hat{\bm{v}}^{(j)}[i] - \frac{n}{e^\epsilon + 1}}{n(\frac{1}{2}-\frac{1}{e^\epsilon + 1})}$.


\mysubpara{Optimal Local Hashing (OLH) \cite{wang2017locally}.}
OLH is an optimal categorical frequency oracle with the same minimal $L_2$ error as OUE.
OLH leverages a family of hash functions $\mathbf{H}$, each of which maps $x_b \in \{1, ..., m_o \}$ to $x_h \in \{1, ..., g\}$, where $g = \lfloor e^\epsilon + 1 \rfloor$. For simplicity, we denote the domain $\{1, ..., m_o \}$ and $\{1, ..., g\}$ as $[m_o]$ and $[g]$. An example of a hash function family is \textsf{xxh32} \cite{xxhash} with different seeds.
In OLH, each user first uses a randomly selected hash function $H \in \mathbf{H}$ to encode the value $x_b$ as $x_h = H(x_b)$.
Given the hash function, each user perturbs the hash value $x_h$ as follows and reports the tuple $\langle H, \hat{x}_h \rangle$.
\begin{align*}
    \forall_{\hat{x}_h \in [g]} \Pr[\Psi_{\mathrm{OLH}}(x_b) = \hat{x}_h] =
    \begin{cases}
        p = \frac{e^\epsilon}{e^\epsilon + g - 1}, & \mathrm{if} \,\,\, \hat{x}_h = x_h \\
        q = \frac{1}{e^\epsilon + g - 1}, & \mathrm{if} \,\,\, \hat{x}_h \neq x_h
    \end{cases}
\end{align*}

Then OLH mechanism estimates the frequency of each bin $B_i (\forall i \in [m_o])$. 
Denote the reported tuple from $j$-th user as $y^{(j)} = \langle H^{(j)}, \hat{x}_h^{(j)} \rangle$. It first counts the noisy data falling into $B_i$ as $C(B_i) = |\{ j | H^{(j)}(x_b) = \hat{x}_h^{(j)}, x_b = i \}|$ and transforms it by an aggregation function $\Phi_{\mathrm{OLH}}(B_i) = \frac{C(B_i) - \frac{n}{g}}{n(\frac{e^\epsilon}{e^\epsilon + g - 1} - \frac{1}{g})}$ to obtain the unbiased frequency estimate.

\mysubpara{ExplicitHist (HST) \cite{cheu2021manipulation}.}
Initially, each user $j$ independently samples a uniform public vector $\bm{s}^{(j)} \in \{\pm 1\}^{m_o}$. Then he picks the $x_b$-th element $s^{(j)}_{x_b}$ in the vector and perturbs it as follows to report.
\begin{align*}
    \Pr[\Psi_{HST}(s^{(j)}_{x_b}) = \hat{s}^{(j)}_{x_b}] =
    \begin{cases}
       \frac{e^\epsilon}{e^\epsilon + 1} , & \mathrm{if} \,\,\, \hat{s}^{(j)}_{x_b} = \frac{e^\epsilon + 1}{e^\epsilon - 1} \times s^{(j)}_{x_b} \\
       \frac{1}{e^\epsilon + 1} , & \mathrm{if} \,\,\, \hat{s}^{(j)}_{x_b} = -\frac{e^\epsilon + 1}{e^\epsilon - 1} \times s^{(j)}_{x_b}
    \end{cases}
\end{align*}

The protocol then aggregates all reports and estimates the frequency of each bin $B_i (\forall i \in [m_o])$ by the aggregation function $\Phi_{\mathrm{HST}}(B_i) = \frac{1}{n} \sum_{j=1}^{n} \hat{s}^{(j)}_{x_b} \times \bm{s}^{(j)}[i]$. The effect is that each user with value in $B_i$ contributes $\frac{e^\epsilon + 1}{e^\epsilon - 1}$ with probability $\frac{e^\epsilon}{e^\epsilon + 1}$ and $-\frac{e^\epsilon + 1}{e^\epsilon - 1}$ with probability $\frac{1}{e^\epsilon + 1}$ to $\Phi_{\mathrm{HST}}(B_i)$. Thus the expected contribution is 1.


\mypara{Relationship between HST and OLH.}
Previous work \cite{wang2017locally} shows that HST is equivalent to OLH when $g=2$  since HST essentially uses binary local hashes to map each input into the domain $\{-1, 1\}$.


A key challenge of applying binning is to choose an appropriate binning granularity $m_o$ \cite{li2020estimating}. A larger $m_o$ leads to accumulated noise error. A smaller one may discard part of the distribution information thus introducing a large bias. A good choice of $m_o$ depends on balancing the above two sources of errors given a privacy budget $\epsilon$. For example, some empirical results~\cite{li2020estimating} show that $m_o = 32$ is the best choice for some real-world distributions.

\mypara{Consistency post-processing.}
Various post-processing methods designed for consistency are proposed in previous work \cite{wang2020locally}. Among them, Norm-Sub is the maximum likelihood estimator for noisy estimates and achieves the most accurate post-processed result overall. The idea of Norm-Sub is to convert negative values to zero and add a factor $\alpha$ to the remaining values such that the total frequencies sum up to one, i.e., $\sum_{i=1}^{m} \max{(\hat{h}_i + \alpha, 0)} = 1$. 
Then the post-processed result for $\hat{h}_i$ (i.e., the estimated frequency $\Phi(B_i)$ of the $i$-th bin) is $\tilde{h}_{i} = \max{(\hat{h}_i + \alpha, 0)}$.  We adopt Norm-Sub in this paper to couple with the CFOs.

\subsection{Distribution Reconstruction}
\mysubpara{Square Wave (SW) \cite{li2020estimating}.}
SW mechanism is another state-of-the-art mechanism supporting numerical distribution estimation. Different from binning-based methods that perturb values in a discrete manner, the SW mechanism considers the ordinal information of the numerical domain, and each value is perturbed as 
\begin{align*}
    \Pr[\Psi_{\mathrm{SW}}(x) = \hat{x}] =
    \begin{cases}
        p, & \mathrm{if} \,\,\, \left| x - \hat{x} \right| \leq b \\
        q, & \mathrm{if} \,\,\, \left| x - \hat{x} \right| > b
    \end{cases},
\end{align*}
where $p = \frac{e^\epsilon}{2be^\epsilon + 1}$, $q = \frac{1}{2be^\epsilon + 1}$ and $b = \frac{\epsilon e^\epsilon - e^\epsilon + 1}{2e^\epsilon (e^\epsilon - 1 - \epsilon)}$. While the input domain is $[0, 1]$, the output domain of SW is $[-b, 1+b]$.

The aggregation $\Phi_{\mathrm{SW}}(\cdot)$ of SW is not a closed-form mathematical expression, but the Expectation–Maximization with Smoothing (EMS) algorithm approximates the maximum-likelihood distribution given the reported values while keeping the smoothness of distributions in the numerical domain by weight average. 


SW also needs to choose the number of bins $m_s$.
Similar to CFO, smaller $m_s$ also raise the bias in the estimated result.
However, larger $m_s$ does not lead to larger LDP noise because the binning is in the aggregation phase and keeping increasing $m_s$ does not significantly improve SW performance but makes EMS converge slowly when $m_s$ is large enough.
According to the empirical study \cite{li2020estimating}, SW performs best in most cases under $m_s = 512$.

\subsection{Related Work}
\noindent\textbf{Data Poisoning Attack.}
We discuss the related work on data poisoning attacks to LDP. Cheu \textit{et al.} \cite{cheu2021manipulation} studied the issue of data manipulation and showed that the vulnerability is inherent to non-interactive LDP protocols. The attacks on frequency estimation and key-value data collection were also studied in \cite{cao2021data} and \cite{wu2021poisoning} respectively. 
They focused on maximizing the statistics of the attacker-chosen items by sending bogus data to the server.
In contrast to straightforward maximization, Li \textit{et al.} \cite{li2022fine} considered an attack that allows the attacker to fine-tune the final estimate to a target value for the mean and variance estimation. We, in this paper, also target the numerical data but focus on the more flexible and versatile distribution estimation, where prior work cannot be applied.

\vspace{2pt}
\noindent\textbf{Mitigation.}
To detect the attack attempt, Cao \textit{et al.} \cite{cao2021data} proposed malicious user detection (MUD) and conditional probability-based attack detection (CPAD) against the attack only on frequency estimation. However, tuning parameters of CPAD for good performance depends on the ground-truth knowledge, which may not be practical. MUD was also adopted in \cite{huang2024ldpguard} for attack detection for OUE and OLH. We compare MUD with our detection method for CFO-based protocols in Section~\ref{sec:detection_evaluation}. Wu \textit{et al.} studied detection for attacks on key-value data \cite{wu2021poisoning}. However, it is challenging to obtain true data of specific genuine users to train the classifier. They also designed the detection for interactive LDP protocols by tracking the reports in different rounds of communications between users and the server. Nevertheless, this method cannot work for non-interactive LDP protocols for numerical data in that the distribution is estimated through one-round reports.
Pollution tolerance \cite{sun2024ldprecover, yan2023towards, li2022fine} is another type of defense that serves as a post-attack recovery method to restore the corrupted utility as much as possible. However, due to the lack of ground truth and attack information (e.g., whether the attack has been launched, the fraction of malicious users, etc.), tuning parameters for satisfactory recovery is challenging.
In this paper, we design a novel zero-shot attack detection to identify poisoning attacks without ground truth information.

\section{Robustness Evaluation Framework} \label{sec:evaluation_framework}
In this section, we introduce the evaluation framework to assess the robustness of LDP protocols for numerical data. We first introduce the overview of our evaluation methodology.
Then we concretely describe each step in the framework.

\subsection{Overview}
The main methodology of the framework is to conduct an attack-driven robustness evaluation for the studied LDP protocols. By measuring the attack effectiveness, the resilience of each protocol against the data poisoning attack can be learned and compared. To this end, the first step of the evaluation framework is to formulate the attack model and define metrics for the attack effectiveness measurement. 

A straightforward metric to measure the attack efficacy is the Wasserstein distance, which measures the distance between the estimated distribution under attack and the clean one. However, Wasserstein distance is directionless and it cannot reflect in which direction the attacked distribution is skewed. Therefore, it is necessary to design new metrics to overcome the limitations and facilitate attack evaluation.

In light of the distinct design choices of the underlying LDP protocols, one type of attack may not stay effective on all of them, leading to biased evaluation results. Hence, we customize the attack to make sure that the attack goal can be achieved and the result can be fairly compared.

Given the attacks and metrics, we will empirically study the robustness of the LDP protocols in various settings. It is anticipated that lower attack effectiveness indicates better LDP robustness. In what follows, we introduce each step in the evaluation framework and the experimental results in Section~\ref{sec:attack_experiments}.


\subsection{Attack Model}

\noindent\textbf{Attacker's Capability.}
To be consistent with previous work \cite{cheu2021manipulation}, we assume the attacker can control $n_f = \beta n$ users to send crafted fake values $\hat{Y} = [\hat{y}^{(i)}]_{i=1}^{n_f}$ to the server, where $n$ is the total number of users and $0 \leq \beta \leq 1$. 
The remaining $n_g = (1-\beta)n$ users are considered benign, sending the perturbed true values to the server.
We also assume the attacker can access the encoding and perturbation steps in the LDP protocol because these steps are deployed locally. Therefore, the attacker knows the parameters of the LDP protocols, including the privacy budget $\epsilon$, the perturbation probability, and the number of bins.

\noindent\textbf{Attacker's Goal.}
The attacker aims to shift the distribution to the right-most end as much as possible. Our attacks are adaptable to other regions of the distribution. For ease of demonstration, we focus on the right side in this paper.

Despite intuitive, our attack poses real-world threats. For example, the attacker can trick consumers into downloading target apps in online stores by skewing their rating distribution to the higher end. There may exist more sophisticated attacks with other goals. However, they may require additional knowledge, such as the fraction of controlled users $\beta$ and the true data distribution, which is difficult to acquire in practice. On the other hand, our attack is inspired by prior work \cite{cao2021data} and generalizable by focusing on the best-effort maximization for distribution skewness, which enables robustness analysis of LDP against known attack strategies.


\vspace{2pt}
\noindent\textbf{Baseline Attack.}
In a general LDP context, the baseline attack on the LDP protocol is that the attacker skews the estimates by providing fake values in the input domain to the LDP protocol and reporting the corresponding randomized outputs. The baseline is considered a universal attack since its effect is indistinguishable from that of honest protocol execution. It was also adopted and called \textit{input manipulation/attack} in prior work \cite{cheu2021manipulation, li2022fine} for attack performance evaluation. For distribution-shift attacks, since the frequencies of all bins sum up to one, an effective way to shift the distribution is to maximally increase the frequency of the right-most bin and decrease the frequencies of other bins at the same time. Therefore, the baseline attacker injects maximums in the input domain of the LDP protocol as fake values. 
Specifically, the baseline attacker repeats $\Psi_{\mathrm{HST}}(m_o)$ (or correspondingly, $\Psi_{\mathrm{OLH}}(m_o)$ for OLH, $\Psi_{\mathrm{SW}}(1)$ for SW) for $n_f$ times and sends the randomized outputs to the server.

The users controlled by a baseline attacker behave identically to those users who happen to own and submit the same values by following the protocol. Therefore, without prior information about the true distribution or other authentication techniques, the server has no way to detect the baseline attack.
Due to the unbiasedness of LDP protocols, the aggregated distribution (before consistency) of the baseline attack $\hat{X}_{a}^{base}$ is approximately equal to $\mathbb{E}(\hat{X}_{a}^{base}) = X_a^{base}$, where $X_a^{base}$ is the skewed distribution in the input domain after baseline attack. Therefore, the outcome of the baseline attack is almost independent of post-processing since $X_a^{base}$ has satisfied consistency.

\subsection{Attack Effectiveness}
\label{sec:attack_effectiveness}
To quantitatively measure the efficacy of the attacks, we propose two metrics in this paper to show how far the poisoned distribution can deviate from the true distribution.

\vspace{2pt}
\noindent\textbf{Absolute Shift Gain (\ASG).}
\begin{figure}
    \centering
    \includegraphics[scale = 0.28]{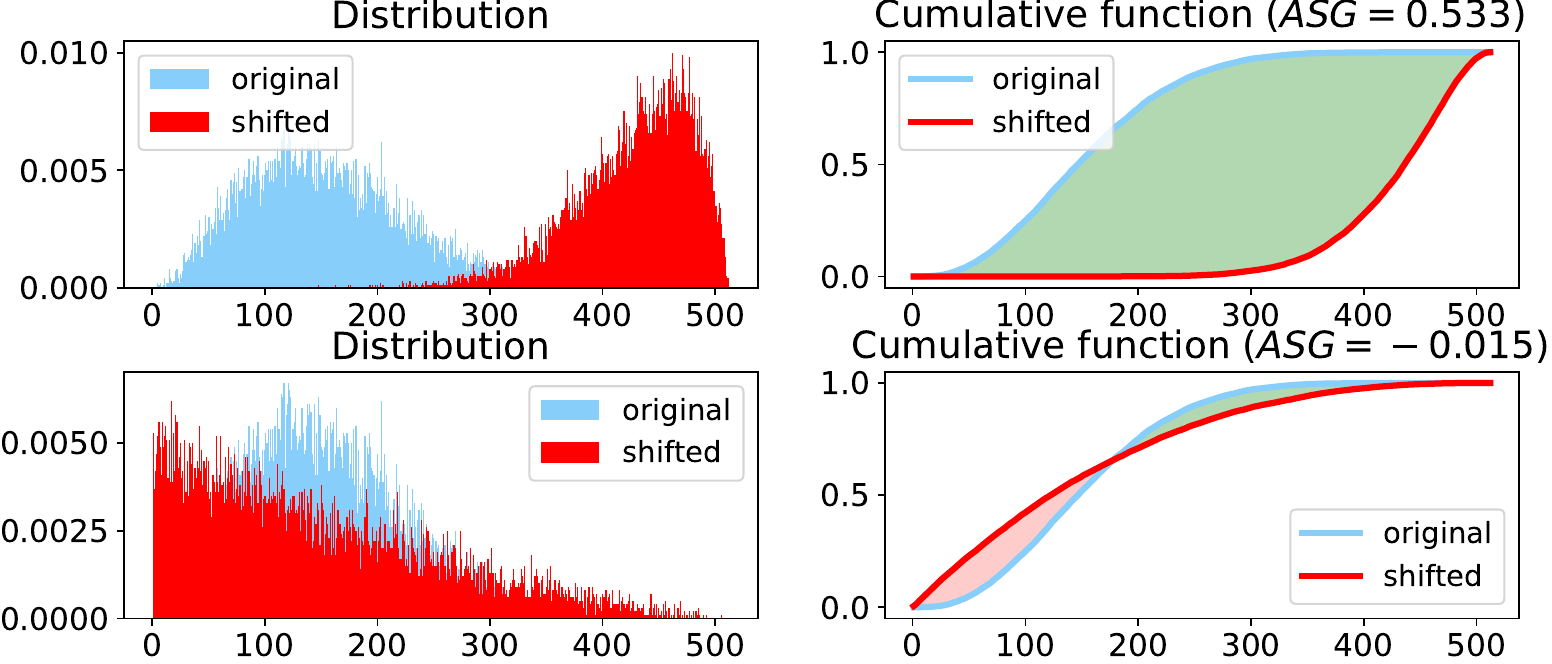}
    \caption{Examples of \ASG. The difference between the cumulative functions of the original and shifted distributions is positive in the green area and negative in the red area.}
    \label{fig:metric_intuition}
    \vspace{-10pt}
\end{figure}
A straightforward metric for attack efficacy is Wasserstein distance, which measures the distance between the estimated distribution with the poisoning attack and the clean one. 
However, it cannot reflect the direction in which the attacked distribution is skewed.
Any local skews on the distribution, whether towards the right or left, can increase the Wasserstein distance.
To address this problem, we adapt the definition of $L_1$ Wasserstein distance but replace the $L_1$ norm with differences with signs to form our new metric. 
Denote the original and the estimated distributions as $X = [\Phi(B_i)]_{i=1}^{m_o}$ and $\hat{X}_a = [\Phi(\hat{B}_i)]_{i=1}^{m_o}$ respectively. \ASG can be formally defined as $\ASG(\hat{X}_a) = \sum_{v=1}^{m_o} P(X, v) - P(\hat{X}_a, v)$,
where $P(X, v) = \sum_{i=1}^{v} \Phi(B_i)$ is the cumulative function over distribution $X$. The intuition behind the metric is shown in Figure~\ref{fig:metric_intuition}. As the distribution is shifted to the right end, the value of the cumulative function on the left side of the domain gets smaller while becoming larger on the right side. The farther the distribution is shifted to the right end, the larger $\sum_{v=1}^{m_o} P(X, v)$ is than $\sum_{v=1}^{m_o} P(\hat{X}_a, v)$. Thus, larger \ASG indicates a better shift effect, thus better attack efficacy. Besides, negative \ASG indicates the overall distribution is shifted to the left side.

\noindent\textbf{Shift Gain Ratio (\SGR).}
The \ASG is a straightforward metric that aligns with the attacker's goal and can measure the ultimate effectiveness of attacks on the numerical LDP protocols.
However, it is implicitly influenced by many factors, including how many users are corrupted by the attacker and what the true data distribution is.
Besides, as we look into the attacks that are more effective than the baseline attacks, \ASG fails to demonstrate the advantage of the proposed attacks over the baseline.
Thus, we propose a new metric, \SGR, to measure the effectiveness of the proposed attacks compared to the baselines and thus enable the analysis of the robustness of LDP protocols.

We first consider the ratio of the shift gain of our attack to that of the baseline attack, i.e., $\frac{\ASG(X, \hat{X}_{a})}{\ASG(X, \hat{X}_{a}^{base})}$, where $\hat{X}_{a}^{base}$ is the estimated distribution after baseline attack. Since the baseline is universal regardless of the underlying protocol, comparing the attack effectiveness using SGR is viable across different protocols, thus enabling the assessment of protocol robustness. However, directly using this ratio as a metric may lead to an unstable measurement of additional variance because the denominator includes $\hat{X}_{a}^{base}$, a random variable with LDP noise.
We replace the term $\hat{X}_{a}^{base}$ with the skewed distribution in the input domain $X_{a}^{base}$.
Note that this is a reasonable substitution because the estimated distribution in well-designed LDP protocols should be close to the distribution in the input domain. Formally, 
\begin{align}
    \SGR(\hat{X}_a) = \frac{\ASG(X, \hat{X}_{a})}{\ASG(X, X_{a}^{base})}
    \label{eq:proposed_attack_metric}
\end{align}

By Equation \eqref{eq:proposed_attack_metric}, a smaller value of the metric means lower attack efficacy. Specifically, the metric equal to 1 indicates the attack performs similarly to the baseline. When the metric is less than 1, the attack underperforms the baseline. Moreover, when the ratio is negative, the attack efficacy tends to move in a different direction than the baseline, e.g., the baseline shifts the distribution to the right end, but the measured attack is not able to achieve this goal.

With a predetermined dataset and a fixed number of corrupted users, \SGR is proportional to \ASG. Thus, \SGR essentially measures the attack efficacy improvement per fake user in the proposed attack versus the baseline, that is, how many fake users are needed in the baseline to achieve the same effect of one fake user in our attack. When $\SGR = \frac{1}{\beta}$, $\beta$ fake users in our attack equals $\beta \times \frac{1}{\beta} = 100\%$ malicious users in baseline. In this case, the frequency of the right-most bin is 1. Therefore, $\frac{1}{\beta}$ is the upper bound of \SGR.
Besides, the normalization with the baseline attack gain brings two benefits. The first is that \SGR is more consistent for the protocols across different datasets.
The other is that \SGR can show how the attack efficacy is affected by different portions of corrupted users, while the \ASG will monotonically increase with the corrupted portion.

\subsection{Attack on CFO with binning and consistency}
Optimal attacks for shifting distribution with different CFOs depend on the real data distribution, which may not be known to the attacker in general. However, the main idea of shifting the distribution to the right side is to increase the frequencies of large values. For robustness analysis, we consider a heuristic attack that increases the frequency of the largest value $\Phi(B_{m_o})$ as much as possible. This attack can ensure the distribution is shifted in the direction as attacker expected for all distribution and empirically achieves significant attack efficacy. We describe $\Phi(B_{m_o})$ with $n_f$ malicious users and the creation of fake reports to each protocol.

\subsubsection{Attacking GRR} The aggregated result $\Phi(B_{m_o})$ and corresponding fake reports are as follows.
\begin{itemize}[leftmargin=*]
    \item \textbf{Aggregation $\Phi(B_{m_o})$:} The estimated frequency $\Phi(B_{m_o}) = \frac{\sum_{j=1}^{n_g} \mathbb{I}_{[m_o]}(y^{(j)}) + \sum_{j=1}^{n_f} \mathbb{I}_{[m_o]}(\hat{y}^{(j)}) - \frac{n}{e^\epsilon + m_o - 1}}{\frac{n(e^\epsilon - 1)}{e^\epsilon + m_o - 1}}$.
    \item \textbf{Fake report craft:} The most effective way to promote $\Phi(B_{m_o})$ is to set all fake values $\hat{y}^{(j)} = m_o$ such that the item $\sum_{j=1}^{n_f} \mathbb{I}_{[m_o]}(\hat{y}^{(j)}) = n_f$.
\end{itemize}

\subsubsection{Attacking OUE} The aggregated result $\Phi(B_{m_o})$ and corresponding fake reports are as follows.
\begin{itemize}[leftmargin=*]
    \item \textbf{Aggregation $\Phi(B_{m_o})$:} The estimated frequency $\Phi(B_{m_o}) = \frac{\sum_{j=1}^{n_g} \hat{\bm{v}}^{(j)}[m_o] + \sum_{j=1}^{n_f} \hat{\bm{y}}^{(j)}[m_o] - \frac{n}{e^\epsilon + 1}}{\frac{1}{2} - \frac{1}{e^\epsilon + 1}}$.

    \item \textbf{Fake report craft:} Since the compromised users in OUE can craft the fake report vector $\hat{\bm{y}}$, the most effective way to promote $\Phi(B_{m_o})$ is to craft $\hat{\bm{y}} = [0, 0, .., 1]$ such that only the frequency of the $m_o$-th bin is contributed.
\end{itemize}

\subsubsection{Attacking HST}
There are two different implementations of HST depending on whether users or the server samples the public vector $s^{(i)} \in \{\pm 1\}^{m_o}$. Without the attack, these two implementations give the same privacy and utility guarantee.
However, our experiments show that when the server generates the public vector for each user, the attack can be largely constrained on his attack efficacy.
Thus, we elaborate on our two implementations as follows.

\mysubpara{\textbf{HST-User}.}
The aggregated result $\Phi(B_{m_o})$ and corresponding fake reports are as follows.
\begin{itemize}[leftmargin=*]
    \item \textbf{Aggregation $\Phi(B_{m_o})$:} When the users can select public vectors by themselves, they can pick binary vectors that promote the frequency of the largest value in the domain. Thus, the estimated frequency of the $m_o$-th bin is $\Phi(B_{m_o}) = \frac{1}{n} (\sum_{j=1}^{n_g} \hat{s}^{(j)}_{x_b} \times \bm{s}^{(j)}[m_o] + \sum_{j=1}^{n_f} \hat{y}^{(j)} \times \hat{\bm{s}}^{(j)}[m_o])$, where $\hat{\bm{s}}^{(j)}$ is the binary vector picked by malicious user $j$.

    \item \textbf{Fake report craft:} The idea of promoting $\Phi(B_{m_o})$ is to craft fake value $\hat{y}^{(j)}$ and carefully select public binary vector $\hat{\bm{s}}^{(j)}$ so that the aggregated result $\hat{y}^{(j)} \times \hat{\bm{s}}^{(j)}[m_o]$ only contributes to $\Phi(B_{m_o})$. Specifically, the attacker sets the public binary vector as $\bm{s}^{(j)} = [-1, -1, ..., 1]$ where only the right-most is 1, and then crafts fake value $\hat{y}^{(j)}$ as $\frac{e^\epsilon + 1}{e^\epsilon - 1}$ to maximally increase the frequency of the right-most bin.
\end{itemize}

\mysubpara{\textbf{HST-Server}.}
The server sets the public binary vector for each user uniformly at random at the beginning of the protocol. Thus, the aggregated result $\Phi(B_{m_o})$ and corresponding fake reports for HST-Server are different from HST-User.
\begin{itemize}[leftmargin=*]
    \item \textbf{Aggregation $\Phi(B_{m_o})$:} Each user in HST-Server must use the assigned binary vector. Thus, the attacker is constrained to use $\bm{s}^{(j)}$ for its corrupted user $j$. Consequently, the estimated frequency of the $m_o$-th bin is $\Phi(B_{m_o}) = \frac{1}{n} (\sum_{j=1}^{n_g} \hat{s}^{(j)}_{x_b} \times \bm{s}^{(j)}[m_o] + \sum_{j=1}^{n_f} \hat{y}^{(j)} \times \bm{s}^{(j)}[m_o])$.
    
    \vspace{2pt}
    \item \textbf{Fake report craft:} Attacker can only manipulate the fake value $\hat{y}^{(j)}$ as the same value as $\frac{e^\epsilon + 1}{e^\epsilon - 1} \times \bm{s}^{(j)}[m_o]$ and in this way the aggregated result $\hat{y}^{(j)} \times \bm{s}^{(j)}[m_o]$ can promote the largest value in the domain. However, since the vector follows uniform distribution and each report supports about half of the bins, the attacker cannot increase the frequency of the largest bin only.
\end{itemize}

\subsubsection{Attacking OLH}
There are also two implementations for OLH depending on whether users or the server chooses the hash function, which become equivalent in terms of privacy and utility guarantees without the attack.
Given a hash function $H$, the attacker can search the hash mappings and learn the inverse of the hash function $H^{-1}$, mapping a hash output in $[g]$ to a subset of $[m_o]$.
Our experiments also show that when the server decides the hash functions for each user, the attacker can be largely constrained on its manipulation efficacy. The two implementations are as follows.

\mysubpara{\textbf{OLH-User}.}
The aggregated result $\Phi(B_{m_o})$ and corresponding fake reports are below.
\begin{itemize}[leftmargin=*]
    \item \textbf{Aggregation $\Phi(B_{m_o})$:} When the users randomly choose the hash function by themselves, the attacker can choose the hash functions from a set of all hash functions $\mathbf{H}$ for the compromised users. Thus, the aggregated result $\Phi(B_{m_o}) = \frac{C_{g}(B_{m_o}) + C_{f}(B_{m_o}) - \frac{n}{g}}{n(\frac{e^\epsilon}{e^\epsilon + g - 1} - \frac{1}{g})}$, where $C_{g}(B_{m_o}) = |\{ j | H^{(j)}(m_o) = \hat{x}_h^{(j)}, j \in \mathrm{genuine\,users} \}|$, $C_{f}(B_{m_o}) = |\{ j | \hat{H}^{(j)}(m_o) = \hat{y}^{(j)}, j \in \mathrm{malicious\,users} \}|$ and $\langle \hat{H}^{(j)}, \hat{y}^{(j)} \rangle$ is fake report of malicious user $j$.

    \item \textbf{Fake report craft:} Ideally, the attacker hopes to craft a fake report $\langle \hat{H}^{(j)}, \hat{y}^{(j)} \rangle$ where the hash function only maps the largest value into $\hat{y}^{(j)}$. However, they may not be able to find such a function within a given amount of time in the set $\mathbf{H}$. To address this problem, we randomly sample 1,000 hash functions as the candidate set in which we find each $\hat{H}^{(j)}$. The fake report should satisfy two conditions: 1) $\hat{H}^{(j)}(m_o) = \hat{y}^{(j)}$ and 2) all values mapped into $\hat{y}^{(j)}$ should concentrate on the higher end as much as possible, i.e., the mean of these values should be largest among all hash functions in $\mathbf{H}$. The first condition guarantees that the fake report promotes the frequency of the largest value and the second ensures that the fake report does not contribute to frequencies of small values and weakens attack efficacy.
\end{itemize}

\mysubpara{\textbf{OLH-Server}.}
OLH-Server has different aggregated result $\Phi(B_{m_o})$ and corresponding fake reports since the server picks a hash function for each user at the beginning of the protocol.
\begin{itemize}[leftmargin=*]
    \item \textbf{Aggregation $\Phi(B_{m_o})$:} Since all users are constrained to use the assigned hash functions, the aggregated result $\Phi(B_{m_o}) = \frac{C_{g}(B_{m_o}) + C_{f}(B_{m_o}) - \frac{n}{g}}{n(\frac{e^\epsilon}{e^\epsilon + g - 1} - \frac{1}{g})}$, where $C_{g}(B_{m_o})$ is the same as that in OLH-User but $C_{f}(B_{m_o}) = |\{ j | H^{(j)}(m_o) = \hat{y}^{(j)}, j \in \mathrm{malicious\,users} \}|$ and $\langle H^{(j)}, \hat{y}^{(j)} \rangle$ is a fake report of malicious user $j$.

    \item \textbf{Fake report craft:} The attacker is constrained to use the assigned $H^{(j)}$ for its corrupted user. Thus, the attacker only needs to find $\hat{y}^{(j)}$ such that $H^{(j)}(m_o) = \hat{y}^{(j)}$.
\end{itemize}

\subsection{Attack on SW}
The convergence point of EMS algorithm depends on the privacy budget $\epsilon$ and the underlying data, which are unknown to the attacker \cite{li2020estimating}. Consequently, the recovered distribution is unpredictable. It is also challenging to find the optimal attack for SW. Similar to the CFO-based methods, the solution to shifting the distribution under SW is still to promote the frequencies of large values in the domain while reducing other values' frequencies accordingly.

We consider an intuitive way to achieve the attack in practice. We inject the fake values into the bins near the right end of the output domain $[-b, 1+b]$ to promote the frequencies of large values. Specifically, we uniformly at random inject the fake values into 1) the right-most bin, 2) range $[1+\frac{2b}{3}, 1+b]$, 3) range $[1, 1+b]$ and 4) range $[1-b, 1+b]$.



\vspace{-.05in}
\begin{tcolorbox}[enhanced, drop shadow southwest, left=1pt, right=1pt, top=0pt, bottom=0pt]
Regarding RQ-1, we find no one-size-fits-all attack strategy for various LDP protocols. However, it is possible to use unified metrics to measure LDP robustness regardless of the design differences of the underlying protocols. We propose two new metrics, i.e., \ASG and \SGR, to quantify the attack strength for fair robustness evaluation, which may also benefit future research.  
\end{tcolorbox}
\vspace{-.05in}
\section{Robustness Evaluation} \label{sec:attack_experiments}

\begin{figure*}[!htbp]
    \centering
    \includegraphics[scale = 0.35]{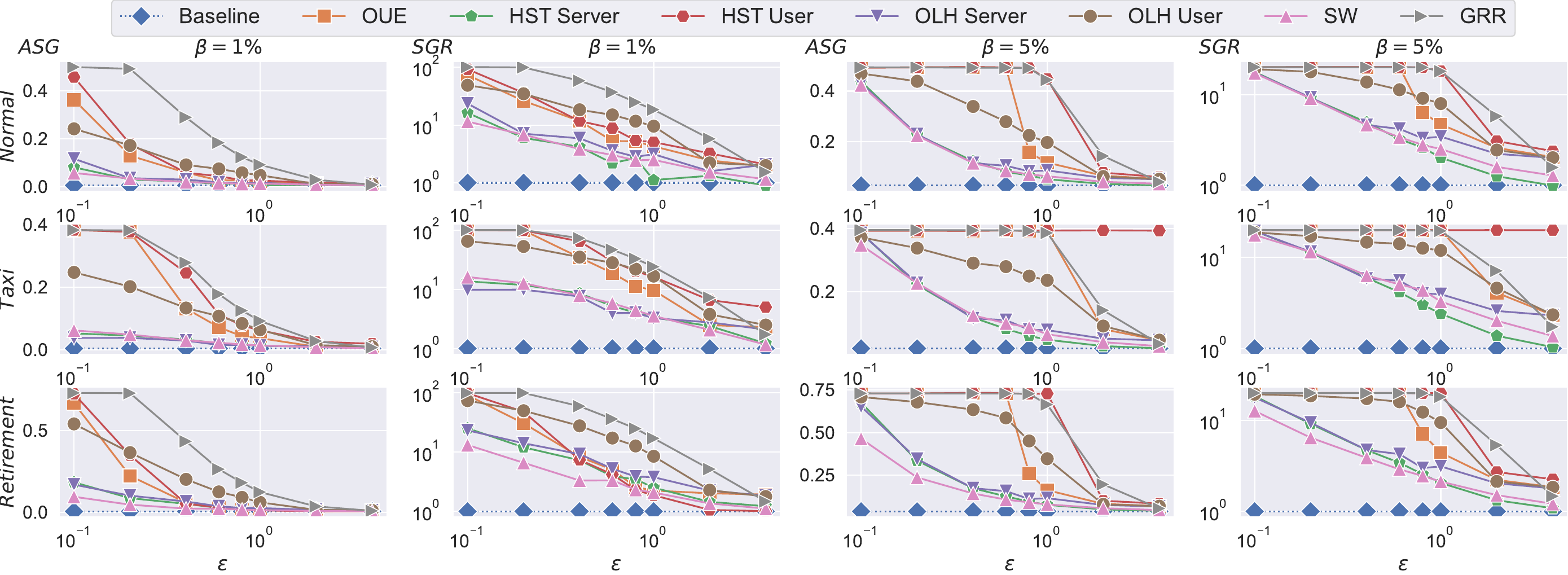}
    \vspace{-7pt}
    \caption{Attack results with varying $\epsilon$ from $0.1$ to $4$. Each row corresponds to one dataset. The left two columns show \ASG and \SGR with $\beta = 1\%$ and the right two columns depict \ASG and \SGR with $\beta = 5\%$.}
    \label{fig:main_result_beta_eps}
    \vspace{-10pt}
\end{figure*}

\subsection{Experiment Setup}
\noindent\textbf{Datasets.}
We use one synthetic dataset and two real-world datasets to conduct our experiments. The dataset information is summarized below and the distribution of the datasets is shown in Figure~\ref{fig:data_dist} in Appendix~\ref{app:supplemental_experiment_results}.
\begin{enumerate}[leftmargin=*]
    \item \textit{Synthetic Gaussian dataset.} We draw $10^5$ samples from normal distribution $\mathcal{N}(0, 10)$ to generate the synthetic dataset.
    
    \item \textit{Taxi} \cite{taxi}. This dataset was published by the New York Taxi Commission in 2018. It contains 2,189,968 samples of pickup time in a day (in seconds).
    
    
    \item \textit{Retirement} \cite{retirement}. This dataset contains data about San Francisco employee retirement plans, ranging from $-28,700$ to 101,000. We extract non-negative values smaller than 60,000 for evaluation. After pre-processing, there are 178,012 samples in our experiments.
\end{enumerate}
It is worth noting that all mechanisms map the data into $[0, 1]$. Therefore, we linearly map the data into the corresponding range for perturbation.

\noindent\textbf{Parameter Settings.}
The attack efficacy depends on privacy budget $\epsilon$ and the fraction of compromised users $\beta$. We vary $\epsilon$ from 0.1 to 4 since they are common values in LDP. We also set $\beta$ from $1\%$ to $7.5\%$ to study the impact of each parameter on robustness. We also set the number of bins for CFO-based protocols and SW as $m_o = 32$ and $m_s = 512$ respectively, because this setting achieves the best performance in most cases \cite{li2020estimating}. For each dataset and each attack, we repeat the experiment 100 times and show the average result.

\subsection{Results}

\subsubsection{Robustness comparison}
In this subsection, we study the impact of varying parameters on the robustness of LDP protocols. 

\noindent\textbf{Impact of $\epsilon$.}
We first empirically study the impact of $\epsilon$ on the robustness of LDP protocols. We attack the SW mechanism by selecting four different ranges and choosing the one that yields the highest \SGR with varying $\epsilon$ for a fair comparison with other protocols. We defer the discussion on the attack performance on SW with different ranges to Section~\ref{sec:attack_efficacy_SW}. Figure~\ref{fig:main_result_beta_eps} shows the attack efficacy with three datasets. We have the following key observations.

\begin{figure*}[!htbp]
    \centering
    \includegraphics[scale = 0.35]{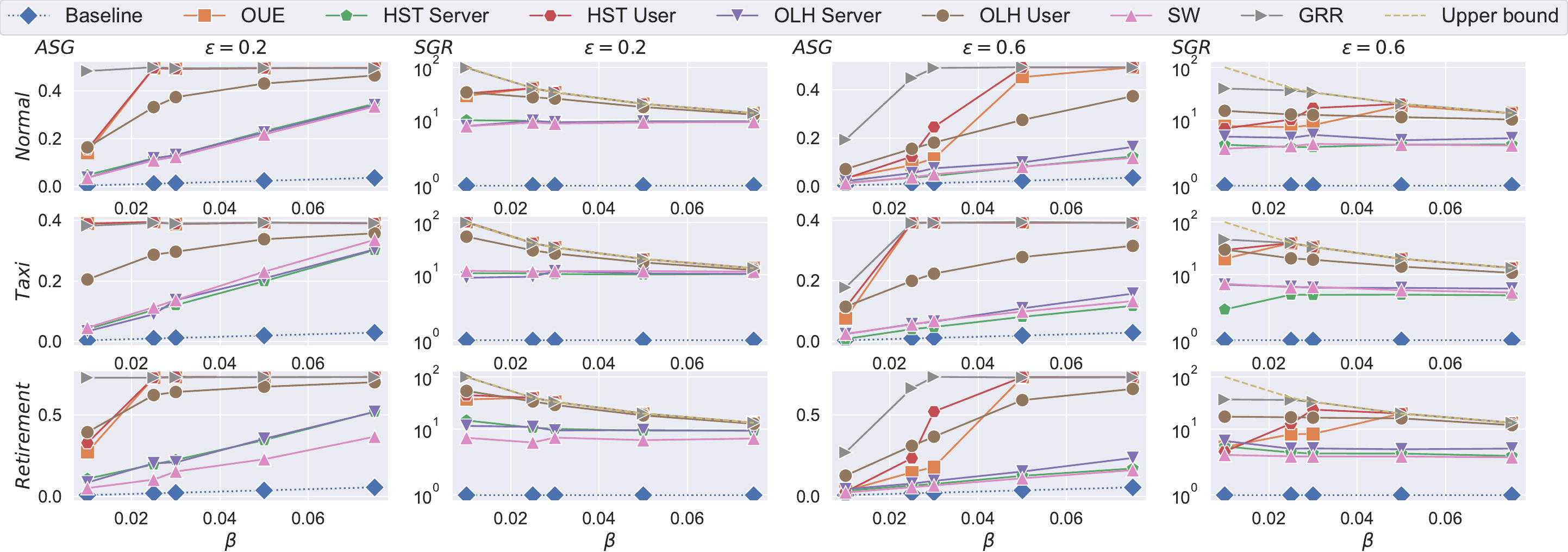}
    \vspace{-7pt}
    \caption{Attack results with varying $\beta$ from $1\%$ to $7.5\%$. Each row corresponds to one dataset. The left two columns show \ASG and \SGR with $\epsilon = 0.2$ and the right two columns depict \ASG and \SGR with $\epsilon = 0.6$.}
    \label{fig:main_result_eps_beta}
    \vspace{-10pt}
\end{figure*}

\begin{itemize}[leftmargin=*]
    \item In most cases, SW and \textit{Server} setting of HST and OLH have the smallest \ASG and \SGR and thus have better natural resilience against the attack. 
    \begin{itemize}[leftmargin=*]
        \item For SW, the reason is that the fake values promote the frequencies of bins near the right end of the domain. The EMS can average the frequencies of polluted bins with other bins and thus reduce their frequencies, making the attack efficacy drop.

        \item For \textit{Server} setting, the hash functions in OLH-Server and public binary vector in HST-Server are assigned by the server and the malicious user cannot ensure that the fake value only supports higher-end bins.
    \end{itemize}
    

    \item GRR, OUE and \textit{User} setting of OLH and HST provide the weakest robustness. This is because in HST-User, GRR and OUE, malicious users can set fake reports to only contribute to the frequency of the right-most bin, and fake reports for OLH-User can also contribute to bins at the higher end of the domain.


    \item For small $\beta$ (e.g., $\beta = 1\%$), OLH-User sometimes is more robust than GRR, OUE and HST-User but its robustness becomes weaker for large $\beta$. This is because the attack only focuses on a single bin (i.e., the right-most bin in the domain) for GRR, OUE and HST-User, while bogus data for OLH-User would support a range of values on the right end. Thus, the attack performs better on OLH-User with a small number of malicious users.

    \item Although GRR, OUE and HST-User can only support the right-most bin, OUE shows better robustness, especially for large $\epsilon$ given enough compromised users (e.g., $\beta = 5\%$). This is because the denominator in $\Phi_{\mathrm{OUE}}(\cdot)$ increases as $\epsilon$ grows, leading to small frequencies of higher-end bins and weak attack efficacy.
    
    \item \ASG (\SGR) reduces as $\epsilon$ grows and the attack efficacy is close to the baseline given a larger $\epsilon$ (e.g., $\epsilon \geq 1$). This is because a larger $\epsilon$ leads to smaller noise, and thus the results recovered by aggregation are closer to the input values and \ASG (\SGR) is closer to the baseline.

    \item GRR shows the worst robustness for small $\beta$ (e.g., $\beta = 1\%$) and the gap between GRR and other protocols is large. This is because the denominator of $\Phi_{\mathrm{GRR}}(\cdot)$ is much smaller than others, promoting the frequency of the right-most bin in attack. However, the gap is small for large $\beta = 5\%$ since fake values in other protocols also influence frequencies of other bins, which helps shift the distribution with consistency and fill the gap with GRR.

    \item The CFO-based methods on \textit{Taxi} performs the worst against the attack, and the \ASG (\SGR) is the largest over a wide range of $\epsilon$. This is because the distribution of the dataset is ``flat'', i.e., there is no significantly high probability density region. Therefore, even under a relatively large $\epsilon$, the frequencies of all bins except for the polluted bins could be negative under the attack. After the Norm-Sub, only the polluted bins have positive frequencies.
\end{itemize}

\noindent\textbf{Impact of $\beta$.}
We then empirically study the impact of $\beta$.
We plot the upper bound $\frac{1}{\beta}$ of \SGR and show the attack effectiveness as a function of $\beta$ in Figure~\ref{fig:main_result_eps_beta}. We have the following observations.

\begin{itemize}[leftmargin=*]
    \item In general, attack effectiveness increases as $\beta$ grows and \SGR asymptotically reaches the upper bound. However, SW, HST-Server and OLH-Server still have the smallest \ASG (\SGR) given various $\beta$ in most cases due to the EMS and malicious users in \textit{Server} setting cannot ensure that fake values only support higher-end bins.

    \item Due to the same reason, GRR, OUE, OLH-User, and HST-User provide the weakest robustness since malicious users can set fake reports to only contribute to the frequency of the higher-end bins.

    \item We also observe that \SGR of GRR, OUE and HST-User reach the upper bound when $\beta$ is sufficiently large, especially with \textit{Taxi}, while OLH-User does not with a large $\beta$. This is because the fake values in GRR, OUE and HST-User only support the right-most bin. But in OLH-User, its fake values do not support the maximal $x_b$ only. Thus, the estimate cannot reach the maximum of the domain.

    \item The results show that the proposed metric \SGR is stable with varying $\beta$. This is because \SGR measures the attack improvement per compromised user compared with the baseline, irrelevant to the number of malicious users.




\end{itemize}

\subsubsection{Robustness and $g$ Trade-off} \label{sec:security_g_relationship}
We are still in the early stage of understanding the threat landscape. It is only known that LDP security has a strong connection with user privacy preference in prior work \cite{li2022fine, cao2021data}, i.e., tuning $\epsilon$ could enhance security. However, it is not the only factor deciding the protocol's robustness. For the local-hashing-based LDP, we found the hash domain size $g$ also has a profound influence.
\begin{theorem}
    For local-hashing-based CFOs with fixed $\epsilon$, the expected \ASG becomes lower (higher) when the hash domain $g$ is smaller (larger) before post-processing.
\label{the:security_g_relationship}
\end{theorem}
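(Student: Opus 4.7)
The plan is to express $\mathbb{E}[\ASG]$ under the canonical distribution-shift attack on OLH (with HST a special case at $g=2$) as a closed-form function of $g$, then reduce the theorem to the monotonicity of a single explicit coefficient in $g$.

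I would first fix the attack: each malicious user with hash $H$ sets $\hat{y} = H(m_o)$, so the fake report always supports $B_{m_o}$. Computing $\mathbb{E}[C_f(B_{m_o})] = n_f$ and $\mathbb{E}[C_f(B_i)] = n_f/g$ for $i \neq m_o$ over the hash randomness, then substituting into $\Phi(B_i) = (C(B_i) - n/g)/(n(p - 1/g))$ with $p = e^\epsilon/(e^\epsilon + g - 1)$ and simplifying via the identity $p - 1/g = (g-1)(e^\epsilon - 1)/(g(e^\epsilon + g - 1))$, yields $\mathbb{E}[\Delta \Phi(B_{m_o})] = \beta \cdot (e^\epsilon + g - 1)/(e^\epsilon - 1)$, while $\mathbb{E}[\Delta \Phi(B_i)] = 0$ for $i \neq m_o$. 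Rewriting the boost as $1 + (g-1)/(e^\epsilon - 1)$ shows it is strictly increasing in $g$ for any fixed $\epsilon > 0$: enlarging the hash domain linearly amplifies the attacker's per-report impact on the right-most bin.

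Next, I would substitute these expected frequencies into the \ASG definition via the rearrangement $\mathbb{E}[\ASG] = \sum_{i=1}^{m_o}(m_o - i + 1)\bigl(f_i - \mathbb{E}[\Phi(\hat{B}_i)]\bigr)$, which exposes \ASG as linear in the per-bin shifts $\Delta \Phi(B_i)$. Only the $i = m_o$ term carries a $g$-dependence — through $(e^\epsilon + g - 1)/(e^\epsilon - 1)$ — while every other term depends on $f$ and $\beta$ alone. The $g$-dependence of $\mathbb{E}[\ASG]$ thus collapses to a single monotone coefficient, from which the claim follows.

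The main obstacle is keeping the monotone ordering intact: pre-consistency, $\sum_i \mathbb{E}[\Phi(\hat{B}_i)]$ typically exceeds $1$, so the terminal cumulative difference $P(X, m_o) - P(\hat{X}_a, m_o)$ enters \ASG with a sign that partially cancels the small-$v$ gains, and the net sign of the $g$-dependent term needs careful bookkeeping. I would handle this by decomposing $\mathbb{E}[\ASG]$ into a $g$-independent baseline arising from $f$ and a $g$-monotone correction proportional to $(e^\epsilon + g - 1)/(e^\epsilon - 1)$, and — when the statement is taken to include the Norm-Sub step — by noting that Norm-Sub acts as a monotone rescaling whenever all shifted frequencies remain non-negative, so the ordering of the raw boost carries over to the normalized \ASG.
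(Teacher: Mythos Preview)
There is a genuine gap, and it is precisely the one you flag as the ``main obstacle'' but then wave away. In your rearrangement
\[
\mathbb{E}[\ASG]=\sum_{i=1}^{m_o}(m_o-i+1)\bigl(f_i-\mathbb{E}[\Phi(\hat B_i)]\bigr),
\]
the weight on bin $m_o$ is $m_o-m_o+1=1$, and the coefficient on $\mathbb{E}[\Phi(\hat B_{m_o})]$ is $-1$. Under your own computation the only $g$-dependent piece is $\mathbb{E}[\Delta\Phi(B_{m_o})]\propto(e^\epsilon+g-1)/(e^\epsilon-1)$, increasing in $g$; hence the $g$-dependent part of $\mathbb{E}[\ASG]$ is $-\beta(e^\epsilon+g-1)/(e^\epsilon-1)$, strictly \emph{decreasing} in $g$. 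No decomposition into ``baseline plus monotone correction'' can flip the sign of that correction --- carried through, your argument proves the opposite direction. (A side remark: the claim $\mathbb{E}[\Delta\Phi(B_i)]=0$ for $i\neq m_o$ is also off; replacing a genuine user by a malicious one removes $\beta f_i$ from bin $i$'s estimate, so the correct shift is $-\beta f_i$. This is $g$-independent in the Server setting and so does not help.) Your closing appeal to Norm-Sub is outside the stated scope --- the theorem is explicitly pre-consistency --- and Norm-Sub is not a monotone rescaling once raw estimates go negative, which they do here.

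For comparison, the paper's proof differs from yours in two ways: it treats the \emph{User} and \emph{Server} attacks separately, and in the User case it obtains $g$-dependent shifts $\mathbb{E}(\Delta f_i)$ on \emph{every} bin, not just $m_o$. It then argues that each $\mathbb{E}(\Delta f_i)$ has nonnegative $g$-derivative and identifies $\mathbb{E}[\ASG]$ with $\sum_v\sum_{i\le v}\mathbb{E}(\Delta f_i)$. Observe, however, that under the paper's own \ASG definition this sum equals $-\ASG$, so the paper's route runs into the same sign obstruction you encountered; neither argument closes the pre-consistency claim as written.
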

\begin{proof}
    See Appendix~\ref{app:proof_security_g_relationship}
\end{proof}
The intuition is that in the aggregation, each report contributes to the frequencies of $\frac{m_o}{g}$ bins on average, which becomes smaller as $g$ grows. Consequently, it is easier for the attacker to manipulate the higher-end bins, leading to enhanced attack performance. 

Theorem~\ref{the:security_g_relationship} proves the relationship between attack efficacy and $g$ without the post-processing that has no closed-form solution. However, our experiment further shows that the relationship remains in Figure~\ref{fig:ASG_and_g}. We have similar observations across different $\epsilon$ and $\beta$ and only show the result with $\epsilon = 0.2$ and $\beta = 5\%$. We also observe that attack efficacy under \textit{User} setting and \textit{Server} setting tend to be close as $g$ grows. This is because each report can support fewer bins under larger $g$. When $g$ is excessively large, fake data only contributes to higher-end bins, leading to weaker robustness (higher \ASG) even under \textit{Server} setting.

\begin{figure}[H]
    \centering
    \includegraphics[scale = 0.34]{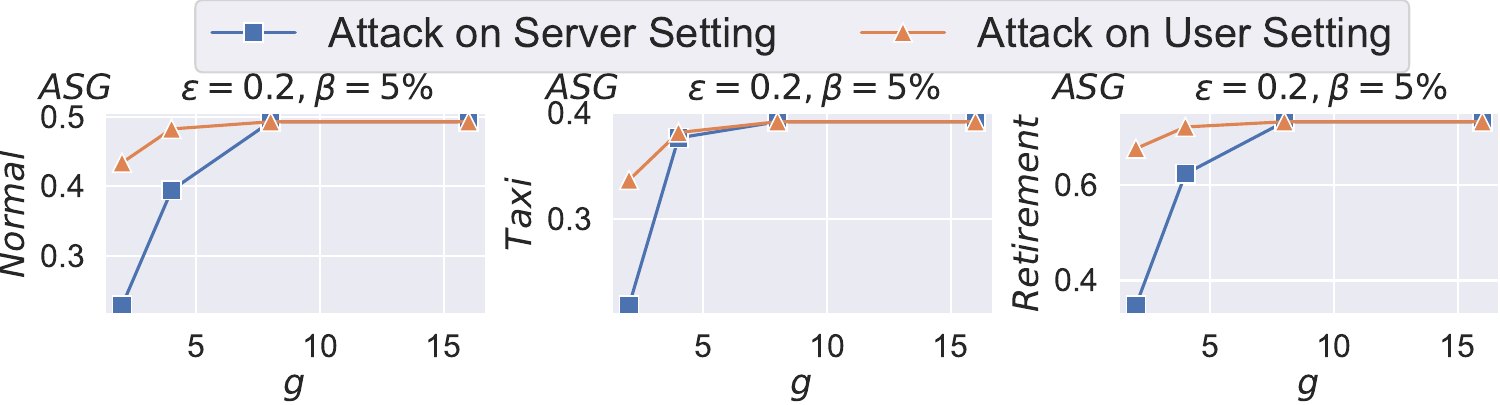}
    \vspace{-5pt}
    \caption{Relationship between attack efficacy and hash domain size $g$.}
    \label{fig:ASG_and_g}
\end{figure}

\begin{figure*}[!t]
    \centering
    \includegraphics[scale = 0.35]{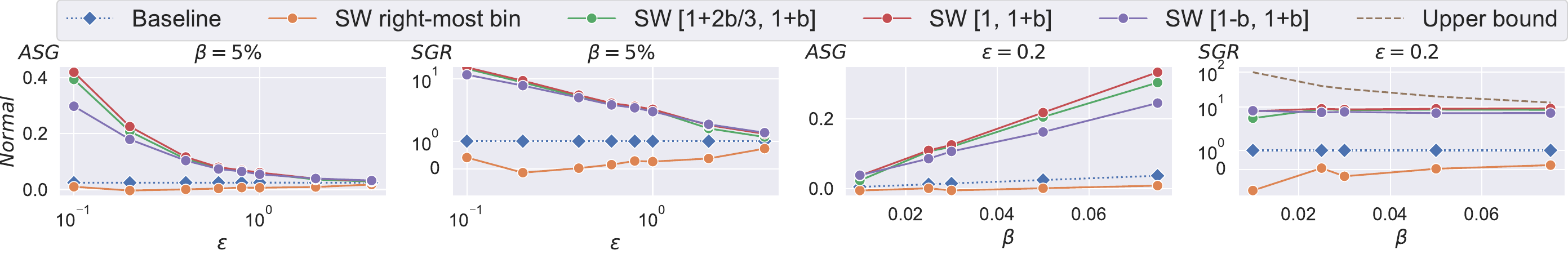}
    \vspace{-7pt}
    \caption{Attack results on SW mechanism on dataset $\mathcal{N}(0, 10)$, varying $\epsilon$ from $0.1$ to $4$. The fake values are injected into 1) the right-most bin, 2) range $[1+\frac{2b}{3}, 1+b]$, 3) range $[1, 1+b]$ and 4) range $[1-b, 1+b]$.}
    \label{fig:SW_bin_eps_beta}
    \vspace{-10pt}
\end{figure*}

\subsubsection{A Closer Look at Attack on SW} \label{sec:attack_efficacy_SW}
We empirically study the efficacy of all four attack strategies on the SW mechanism. We have similar observations across all three datasets and only show the result with the dataset $\mathcal{N}(0, 10)$  in Figure~\ref{fig:SW_bin_eps_beta} due to space limitations. The rest are shown in Figure~\ref{fig:SW_bin_eps_beta_full_results} in Appendix~\ref{app:supplemental_experiment_results}. Overall, except for injecting fake values into the right-most bin, other attack types perform similarly, and all of them can effectively increase the \ASG/\SGR of the baseline and achieve satisfactory attack performance. In particular, the \SGR is greater than 10 with $\epsilon = 0.1$. This is because these types of attacks increase the probability density region of the right end of the domain, thus skewing the distribution substantially. We choose the attack range $[1, 1+b]$ as the best attack in this paper since it is the empirically strongest one.

We also observe that injecting fake values into the right-most bin performs the worst in our experiments. Its \SGR is less than that of the baseline in most cases and even negative sometimes. Moreover, increasing $\beta$ does not lead to an obvious improvement in attack effectiveness because the fake values are only in one bin, and the smoothing step in EMS can significantly reduce the anomaly high frequency of this bin even with a large $\beta$. In addition, since injecting fake values into only one bin is largely impacted by EMS whose output is unpredictable, the relationship between attack efficacy and $\epsilon$ is also uncertain.

The third observation is that the attack efficacy is close to the baseline with a large $\epsilon$. This is because a larger $\epsilon$ leads to smaller noise. Thus the results recovered by aggregation are closer to the input values and the \SGR approaches that of the baseline.



\vspace{-.05in}
\begin{tcolorbox}[enhanced,drop shadow southwest, left=1pt, right=1pt, top=0pt, bottom=0pt]
For RQ-2, our research reveals that different LDP protocols indeed exhibit distinct robustness in the presence of the attack. SW and CFO-based protocols in \textit{Server} setting are more resilient due to the additional smoothing step and constrained resources. Our investigation generates new knowledge about the relationship between LDP implementation and security, i.e., the hash domain size $g$ in OLH will affect the attack resilience. The result highlights the need to study attack-resilient post-processing and fine-tune related parameters for real-world LDP implementation.
\end{tcolorbox}
\vspace{-.05in}

\section{Zero-shot Attack Detection}

We illustrate our detection method in this section. We call it \textit{zero-shot attack detection} because it does not require any ground truth data, similar to zero-shot learning~\cite{zeroshot}.

\vspace{-10pt}
\subsection{Detection Details}
\vspace{-3pt}


In a scenario without poisoning attacks, the recovered distribution should be close to the original distribution.
However, in the presence of poisoning attacks, attackers submit fake values directly to the data collector instead of executing LDP random perturbation. This results in the distribution of poisoned reports being a very low likelihood outcome of any valid input distribution. In other words, it is unlikely to observe a perturbed distribution similar to that of poisoned reports given any valid input distribution.

Based on the above intuition, we present a novel detection mechanism by reconstructing a comparison benchmark. The benchmark is created using the distribution information returned by the LDP mechanism. Without the attack, recovered data distribution should be close to the original. We denote the synthesis function as $\mathcal{S}(\hat{X}, t)$, which takes noisy report $\hat{X}$ and an integer $t$ as input, then estimates a distribution from $\hat{X}$ and outputs $t$ samples drawn from recovered distribution. The noisy report $\hat{X}$ resulting from applying LDP perturbation to the original distribution, and $\hat{X}_2$ resulting from applying the same LDP randomizer to $\mathcal{S}(\hat{X}, t)$ should have similar distribution. When applying $\mathcal{S}$ to the noisy report $\hat{X}_2$, one should again recover the similar distribution, and thus $\hat{X}_3$ generated by applying again the LDP randomizer to $\mathcal{S}(\hat{X}_2, t)$ should also result in distributions similar to $\hat{X}_2$. Hence, we can detect the manipulation by measuring the two groups of distances, one is between $\hat{X}$ and $\hat{X}_2$ denoted as $g_{det}$ and the other is between $\hat{X}_2$ and $\hat{X}_3$ denoted as $g_{ben}$. Since $\hat{X}_2$ and $\hat{X}_3$ are generated without manipulation and their distribution should always be close, we consider $g_{ben}$ as the benchmark. Under the attack, the polluted $\hat{X}$ is low-likelihood outcome of any valid input, and applying the same LDP randomizer to such $\mathcal{S}(\hat{X}, t)$ cannot output a similar distribution to that of $\hat{X}$, leading to that $g_{det}$ statistically different from $g_{ben}$.
Figure~\ref{fig:detection_workflow} shows the detection workflow. The detection is described in Algorithm~\ref{alg:detection}.

\begin{figure}[!t]
    \centering
    \includegraphics[scale = 0.23]{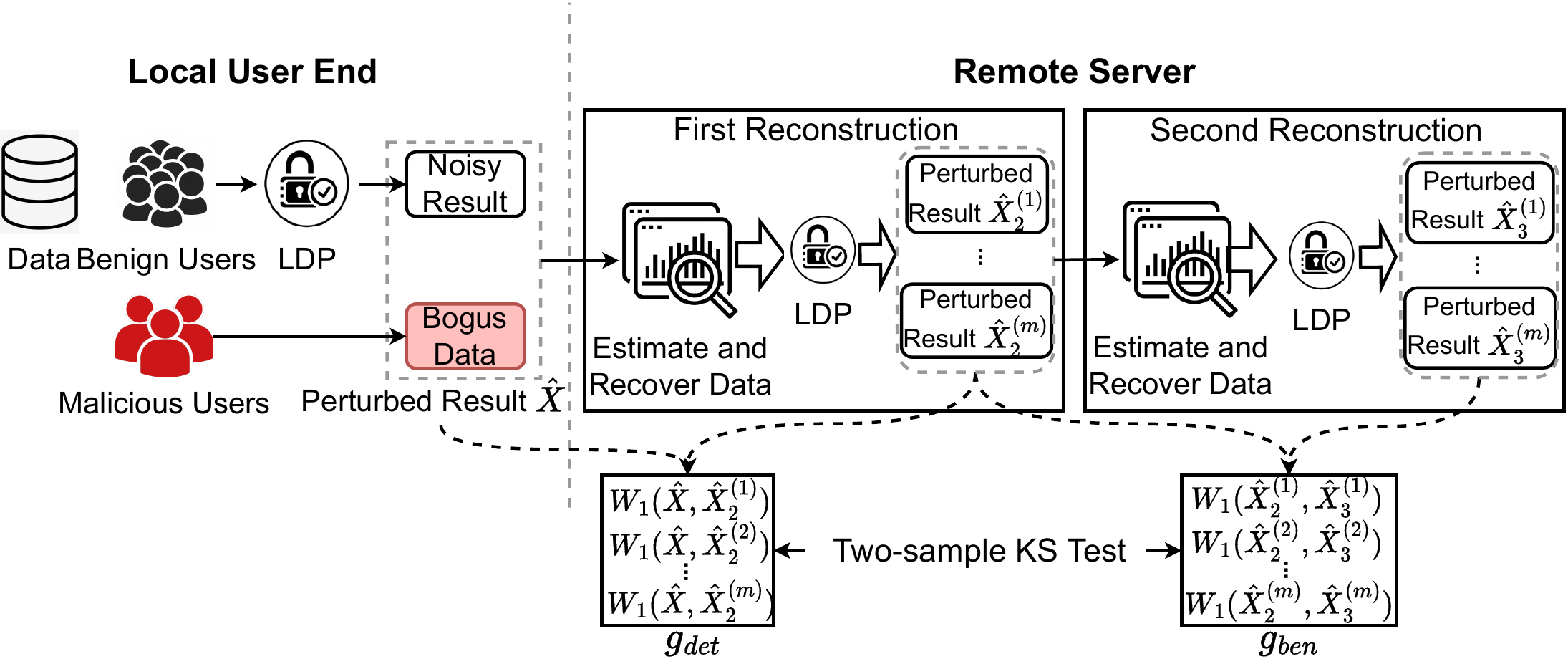}
    \caption{Zero-shot Detection Workflow.}
    \vspace{-25pt}
    \label{fig:detection_workflow}
\end{figure}


Formally, we assume there are $n$ users and denote the reported perturbed result as $\hat{X}$. After receiving $\hat{X}$, the server can estimate the data distribution and draw $n$ samples from the distribution. Following the LDP protocol, the server perturbs the sampled data $m$ times and gets a group of reconstructed noisy results $\hat{\bm{X}}_2 = [\hat{X}_2^{(i)}]_{i=1}^{m}$. We generate $m$ noisy results because the perturbation is random and multiple results can decrease statistical uncertainty. Given the noisy result $\hat{\bm{X}}_2$, the server repeats this reconstruction process to derive the noisy result $\hat{\bm{X}}_3 = [\hat{X}_3^{(i)}]_{i=1}^{m}$. Then we calculate the pairwise distribution distance between  $\hat{X}_2^{(i)}$ and $\hat{X}_3^{(i)}$ and form a group of distances $[W_1(\hat{X}_2^{(i)}, \hat{X}_3^{(i)})]_{i=1}^{m}$. We consider the Wasserstein distance (denoted as $W_1(\cdot)$) since it measures the overall difference between two distributions. Since we honestly follow the LDP protocol to generate $\hat{\bm{X}}_2$ and $\hat{\bm{X}}_3$, the distance $W_1(\hat{X}_2^{(i)}, \hat{X}_3^{(i)})$ should be small and is used as the benchmark $g_{ben}$ to compare with the distance group $[W_1(\hat{X}, \hat{X}_2^{(i)})]_{i=1}^{m}$ denoted as $g_{det}$. 

\begin{algorithm}[!htbp]
	\caption{Zero-shot Attack Detection} \label{alg:detection}
		\textbf{Input:} Noisy results $\hat{X}$, significance level $\alpha$, $\epsilon$, $n$. \\
		\textbf{Output:} Whether the reported data is polluted.
	\begin{algorithmic}[1]
		\State Initialize $\hat{\bm{X}}_2 = \hat{\bm{X}}_3 = \emptyset$.
		\State Get synthetic data $X = \mathcal{S}(\hat{\bm{X}}, n)$
		\For{$i \in [m]$}   \Comment{Reconstruct noisy results}
		\State $\hat{X}_2^{(i)} \leftarrow$ Perturb $X$ by LDP randomizer with $\epsilon$.
		\State Get synthetic data $X_2 = \mathcal{S}(\hat{X}_2^{(i)}, n)$
		\State $\hat{X}_3^{(i)} \leftarrow$ Perturb $X_2$ by LDP randomizer with $\epsilon$.
		\State $\hat{\bm{X}}_2.{append}[\hat{X}_2^{(i)}], \hat{\bm{X}}_3.{append}[\hat{X}_3^{(i)}]$
		\EndFor
		\State Benchmark distance group $g_{ben} \leftarrow [W_1(\hat{X}_2^{(i)}, \hat{X}_3^{(i)})]_{i=1}^{m}$
		\State Distance group to be detected $g_{det} \leftarrow [W_1(\hat{X}, \hat{X}_2^{(i)})]_{i=1}^{m}$
		\State $p$-value $\leftarrow \mathrm{Two\_Sample\_KS\_Test}(g_{det}, g_{ben})$.
		\If {$p$-value $< \alpha$}
		\State \Return The reported result is polluted.
		\Else
		\State \Return The reported result is unpolluted.
		\EndIf
	\end{algorithmic}
\end{algorithm}

We statistically analyze the similarity between $g_{det}$ and $g_{ben}$. Intuitively, $g_{det}$ and $g_{ben}$ should follow the same or similar distribution if the reported data are clean. Therefore, we formulate the detection as a hypothesis test problem. The \textit{null hypothesis} is that the reported data is not polluted, i.e., $g_{det}$ and $g_{ben}$ come from the same distribution, and the \textit{alternative hypothesis} is that the reported data contains bogus data and the distributions of $g_{det}$ and $g_{ben}$ are differentiable.
We adopt the two-sample KS test \cite{massey1951kolmogorov} (see Appendix~\ref{app:two_sample_KS_test} for more details) to solve this hypothesis test problem. 
We first calculate the empirical distribution of $g_{ben}$ and $g_{det}$. Denoting the empirical distribution function of $g_{ben}$ and $g_{det}$ as $F_{ben}(x)$ and $F_{det}(x)$ respectively, we have $F_{ben}(x) = \frac{1}{m} \sum_{i=1}^{m} \mathbb{I}_{[-\infty, x]}\left( W_1(\hat{X}_2^{(i)}, \hat{X}_3^{(i)}) \right)$ and $F_{det}(x) = \frac{1}{m} \sum_{i=1}^{m} \mathbb{I}_{[-\infty, x]}\left( W_1(\hat{X}, \hat{X}_2^{(i)}) \right)$,
where $\mathbb{I}_{[-\infty, x]}(d)$ is the indicator function equal to 1 if $d \leq x$ and 0 otherwise.
Then we have the KS statistic $S = \sup_{x} \left| F_{ben}(x) - F_{det}(x) \right|$
and the $p$-value is $p = 2e^{-2S^2 \frac{m^2}{2m}}$ \cite{massey1951kolmogorov}. Comparing $p$ with the specified significance level $\alpha$, we can accept or reject the null hypothesis, i.e., whether the reported data is polluted.

\vspace{-.05in}
\begin{tcolorbox}[enhanced,drop shadow southwest, left=1pt, right=1pt, top=0pt, bottom=0pt]
For RQ-3, our answer is affirmative. We propose a zero-shot detection method by leveraging the LDP characteristics and likelihood information in data distribution for the underlying LDP protocols. Our method provides a new perspective for designing universal defense in practice without the unrealistic assumption of prior knowledge about ground truth data information. 
\end{tcolorbox}
\vspace{-.05in}

\subsection{Evaluation}
\label{sec:detection_evaluation}

\noindent\textbf{Datasets.}
We use the same datasets as attack evaluation introduced in Section~\ref{sec:attack_experiments}.

\noindent\textbf{Adaptive Attack on OUE.}
In the robustness evaluation, we focus on the most empirically effective attack to investigate the robustness of OUE in the worst case. However, the fake value that only sets the right-most bin as 1 is a clear anomaly report in LDP, which can be detected trivially. To better evaluate our detection, we also adopt a more stealthy attack strategy for OUE. In addition to the right-most bin, the attacker also randomly samples $l$ bits of the perturbed vector $\hat{\bm{y}}^{(j)}$ and sets them to 1, such that the total number of 1's is equal to the expected number of 1's in the perturbed vector of a genuine user. Based on the perturbation probability of OUE, we have $l = \lfloor \frac{m_o - 1}{e^\epsilon + 1} - \frac{1}{2} \rfloor$. We call this attack OUE-Pad in the experiments. Although OUE-Pad is more stealthy, it is weaker than the adopted attack (see Table~\ref{tab:auc_table_CFO}) and thus the result would not be a satisfactory indicator of the actual resilience performance of OUE.

\noindent\textbf{Existing Detection Methods.} To the best of our knowledge, there exists no detection for SW. 
The existing detection methods \cite{cao2021data, huang2024ldpguard} are designed for the poisoning attack on CFO protocols that aim to promote the frequencies of targeted items. In  \cite{cao2021data}, two detection methods are proposed, namely, conditional probability-based attack detection (CPAD) and malicious user detection (MUD). MUD is also adopted in \cite{huang2024ldpguard} for OUE and OLH. The CPAD cannot be applied to attacks on numerical data since it depends on two assumptions that do not hold in our attack: 1) the true frequency of the target bin is close to zero, and 2) the frequency of the target bin after attack becomes the top-$N$ frequencies. Further, it is ineffective for small $\beta$ (e.g., $\beta \leq 20\%$) as it depends on the true data distribution, which is unknown to the server. 

When the perturbed frequency closely resembles that derived from a legitimate distribution, the server cannot discern whether the noisy report is polluted. Consequently, MUD considers the perturbed result under the assumption that all users are genuine and their values fall within the targeted bin (i.e., the right-most bin). In cases where the perturbed results exhibit statistical divergence in this extreme scenario, the server can detect the attack by inferring that the perturbed result cannot be derived from a legitimate distribution. In other words, if the perturbed reports are generated through LDP protocols, it is unlikely to observe a large number of reports supporting the same targeted bin. MUD sets a threshold $\tau$ and triggers an alarm if the number of the perturbed reports supporting the right-most bin exceeds $\tau$. In \cite{cao2021data}, MUD does not work for GRR since the fake value is only a single index no matter whether or not the attacker follows the protocol. An effective threshold for OUE is determined as $\sqrt{\frac{n/4}{0.01}} + \frac{n}{2}$, while for OLH and HST, the thresholds are the smallest solutions for $\tau$ in $I(\frac{1}{2}; \tau, n - \tau + 1) \leq 0.01$ and $I(\frac{e^\epsilon}{e^\epsilon + 1}; \tau, n - \tau + 1) \leq 0.01$ respectively. Here, $I$ is the regularized incomplete beta function \cite{abramowitz1988handbook}.

\noindent\textbf{Metrics.}
The proposed detection classifies the reported result as ``polluted" or ``unpolluted".  Therefore, we can use metrics for binary classifiers to measure the performance of our mitigation. In the evaluation, we adopt the Receiver Operating Characteristic curve \cite{fawcett2006introduction} to capture the relationship between the true positive rate and false positive rate of the classifier. The Area Under the Curve is used to measure the performance of the classifier.  In general, the AUC ranges from 0 to 1. The larger the AUC is, the better the classifier performs. When AUC equals 0.5, the classifier predicts a random class for all the data points. A classifier with 100\% false prediction has an AUC of 0, while a perfect classifier that is always correct will generate an AUC of 1.0.
For the total of 100 trials of experiments for each protocol, half of the times are under attack while the other half is without attack. We also mark the unpolluted and polluted results as the positive class and negative class respectively. 

\noindent\textbf{Parameter Settings.}
We set the default $m = 10$. According to our analysis, the attack efficacy is low with large $\epsilon$. Therefore, we only evaluate the detection with small $\epsilon$, i.e., 0.2, 0.6 and 1. In order to comprehensively study the detection effectiveness, we further test $\beta$ at 10\% for additional insights and thus set the relevant parameters $\beta = 1\%$,  $2.5\%$, $5\%$ and $10\%$, $m_o = 32$ and $m_s = 512$.

\begin{table*}[!htbp]
\centering
\scriptsize
\caption{AUC values of attack detection for SW.}
\label{tab:auc_table_SW}
\begin{tabular}{lcccccccccc}
\hline
\multirow{2}{*}{Dataset} & \multirow{2}{*}{$\beta$} & \multirow{2}{*}{$\epsilon$} & \multicolumn{2}{c}{SW right-most bin} & \multicolumn{2}{c}{SW [1+2/3b, 1+b]} & \multicolumn{2}{c}{SW [1, 1+b]} & \multicolumn{2}{c}{SW [1-b, 1+b]} \\
 &  &  & AUC & \ASG & AUC & \ASG & AUC & \ASG & AUC & \ASG \\ \hline
\multicolumn{1}{l|}{} & \multicolumn{1}{l}{} & $0.2$ & 1.00 & 0.00289 & 1.00 & 0.0404 & 1.00 & 0.04536 & 0.8592 & 0.04571 \\
\multicolumn{1}{l|}{} & $1\%$ & $0.6$ & 1.00 & 0.00199 & 1.00 & 0.0224 & 1.00 & 0.0239 & 0.685 & 0.0179 \\
\multicolumn{1}{l|}{} & \multicolumn{1}{l}{} & $1$ & 1.00 & 0.00084 & 1.00 & 0.0127 & 1.00 & 0.0131 & 0.56 & 0.0122 \\ \cline{2-11} 
\multicolumn{1}{l|}{} &  & $0.2$ & 1.00 & 0.00364 & 1.00 & 0.1042 & 1.00 & 0.1132 & 0.9996 & 0.1052 \\
\multicolumn{1}{l|}{} & $2.5\%$ & $0.6$ & 1.00 & 0.00362 & 1.00 & 0.0569 & 1.00 & 0.0571 & 0.9872 & 0.0447 \\
\multicolumn{1}{l|}{} &  & $1$ & 1.00 & 0.00215 & 1.00 & 0.03129 & 1.00 & 0.0329 & 0.9840 & 0.03037 \\ \cline{2-11} 
\multicolumn{1}{l|}{} &  & $0.2$ & 1.00 & 0.01114 & 1.00 & 0.2056 & 1.00 & 0.2284 & 1.00 & 0.1915 \\
\multicolumn{1}{l|}{Taxi} & $5\%$ & $0.6$ & 1.00 & 0.00962 & 1.00 & 0.0981 & 1.00 & 0.0991 & 1.00 & 0.0891 \\
\multicolumn{1}{l|}{} &  & $1$ & 1.00 & 0.00807 & 1.00 & 0.0614 & 1.00 & 0.0644 & 1.00 & 0.0601 \\ \cline{2-11} 
\multicolumn{1}{l|}{} &  & $0.2$ & 1.00 & 0.0171 & 1.00 & 0.3155 & 1.00 & 0.3416 & 1.00 & 0.2615 \\
\multicolumn{1}{l|}{} & $7.5\%$ & $0.6$ & 1.00 & 0.0213 & 1.00 & 0.1324 & 1.00 & 0.1352 & 1.00 & 0.1286 \\
\multicolumn{1}{l|}{} &  & $1$ & 1.00 & 0.0176 & 1.00 & 0.091 & 1.00 & 0.0959 & 1.00 & 0.0896 \\ \cline{2-11} 
\multicolumn{1}{l|}{} &  & $0.2$ & 1.00 & 0.02981 & 1.00 & 0.3859 & 1.00 & 0.3707 & 1.00 & 0.3145 \\
\multicolumn{1}{l|}{} & $10\%$ & $0.6$ & 1.00 & 0.0363 & 1.00 & 0.1642 & 1.00 & 0.1696 & 1.00 & 0.1652 \\
\multicolumn{1}{l|}{} & \multicolumn{1}{l}{} & $1$ & 1.00 & 0.02907 & 1.00 & 0.1218 & 1.00 & 0.1269 & 1.00 & 0.1186 \\ \hline
\end{tabular}
\end{table*}

\begin{table*}[!htbp]
\setlength\tabcolsep{1pt}
\centering
\scriptsize
\caption{AUC values of attack detection for CFO-based mechanisms. ``--'' indicates the AUC value is zero.}
\label{tab:auc_table_CFO}
\begin{tabular}{lcccccccccccccccc}
\hline
\multirow{2}{*}{Dataset} & \multirow{2}{*}{$\beta$} & \multirow{2}{*}{$\epsilon$} & \multicolumn{2}{c}{HST-Server} & \multicolumn{2}{c}{HST-User} & \multicolumn{2}{c}{OLH-Server} & \multicolumn{2}{c}{OLH-User} & \multicolumn{2}{c}{OUE} & \multicolumn{2}{c}{GRR} & \multicolumn{2}{c}{OUE-Pad} \\
 &  &  & Ours~/~MUD & \ASG & Ours~/~MUD & \ASG & Ours~/~MUD & \ASG & Ours~/~MUD & \ASG & Ours~/~MUD & \ASG & Ours~/~MUD & \ASG & Ours~/~MUD & \ASG \\ \hline
\multicolumn{1}{l|}{} & \multicolumn{1}{l}{} & $0.2$ & 0.4416~/~-- & 0.0484 & 1.00~/~-- & 0.391 & 0.476~/~-- & 0.047 & 0.9952~/~-- & 0.207 & 1.00~/~-- & 0.391 & 1.00~/~-- & 0.391 & 1.00~/~-- & 0.094 \\
\multicolumn{1}{l|}{} & $1\%$ & $0.6$ & 0.4384~/~-- & 0.0054 & 1.00~/~-- & 0.101 & 0.4232~/~-- & 0.0189 & 0.6224~/~-- & 0.105 & 1.00~/~-- & 0.071 & 1.00~/~-- & 0.184 & 1.00~/~-- & 0.021 \\
\multicolumn{1}{l|}{} & \multicolumn{1}{l}{} & $1$ & 0.3306~/~-- & 0.008 & 0.9744~/~-- & 0.058 & 0.392~/~-- & 0.015 & 0.5784~/~-- & 0.063 & 0.6808~/~-- & 0.026 & 0.8992~/~-- & 0.096 & 0.6407~/~-- & 0.016 \\ \cline{2-17} 
\multicolumn{1}{l|}{} &  & $0.2$ & 0.4972~/~-- & 0.0992 & 1.00~/~-- & 0.391 & 0.6167~/~-- & 0.117 & 1.00~/~-- & 0.285 & 1.00~/~-- & 0.39 & 1.00~/~-- & 0.391 & 1.00~/~-- & 0.195 \\
\multicolumn{1}{l|}{} & $2.5\%$ & $0.6$ & 0.4872~/~-- & 0.03432 & 1.00~/~-- & 0.392 & 0.4933~/~-- & 0.0407 & 1.00~/~-- & 0.2012 & 1.00~/~-- & 0.382 & 1.00~/~-- & 0.387 & 1.00~/~-- & 0.063 \\
\multicolumn{1}{l|}{} &  & $1$ & 0.3696~/~-- & 0.027 & 1.00~/~-- & 0.344 & 0.4~/~-- & 0.035 & 1.00~/~-- & 0.146 & 0.9696~/~-- & 0.11 & 1.00~/~-- & 0.22 & 0.913~/~-- & 0.044 \\ \cline{2-17} 
\multicolumn{1}{l|}{} &  & $0.2$ & 0.555~/~-- & 0.18 & 1.00~/~-- & 0.3912 & 0.6504~/~-- & 0.204 & 1.00~/~-- & 0.3353 & 1.00~/~-- & 0.391 & 1.00~/~-- & 0.391 & 1.00~/~-- & 0.391 \\
\multicolumn{1}{l|}{Taxi} & $5\%$ & $0.6$ & 0.5352~/~-- & 0.072 & 1.00~/~-- & 0.391 & 0.5392~/~-- & 0.106 & 1.00~/~-- & 0.277 & 1.00~/~-- & 0.391 & 1.00~/~-- & 0.391 & 1.00~/~-- & 0.147 \\
\multicolumn{1}{l|}{} &  & $1$ & 0.4976~/~-- & 0.048 & 1.00~/~-- & 0.389 & 0.4844~/~-- & 0.0771 & 1.00~/~-- & 0.23 & 1.00~/~-- & 0.389 & 1.00~/~-- & 0.385 & 1.00~/~-- & 0.102 \\ \cline{2-17}
\multicolumn{1}{l|}{} &  & $0.2$ & 0.6211~/~-- & 0.29 & 1.00~/~-- & 0.391 & 0.7411~/~-- & 0.31 & 1.00~/~-- & 0.357 & 1.00~/~-- & 0.392 & 1.00~/~-- & 0.391 & 1.00~/~-- & 0.391 \\
\multicolumn{1}{l|}{} & $7.5\%$ & $0.6$ & 0.6089~/~-- & 0.108 & 1.00~/~-- & 0.39 & 0.6494~/~-- & 0.15 & 1.00~/~-- & 0.31 & 1.00~/~-- & 0.391 & 1.00~/~-- & 0.391 & 1.00~/~-- & 0.255 \\
\multicolumn{1}{l|}{} &  & $1$ & 0.5667~/~-- & 0.073 & 1.00~/~-- & 0.391 & 0.5583~/~-- & 0.11 & 1.00~/~-- & 0.27 & 1.00~/~-- & 0.391 & 1.00~/~-- & 0.391 & 1.00~/~-- & 0.154 \\ \cline{2-17}
\multicolumn{1}{l|}{} &  & $0.2$ & 0.8432~/~0.575 & 0.382 & 1.00~/~0.575 & 0.392 & 0.7912~/~0.55 & 0.377 & 1.00~/~0.55 & 0.37 & 1.00~/~0.55 & 0.392 & 1.00~/~-- & 0.391 & 1.00~/~0.55 & 0.391 \\
\multicolumn{1}{l|}{} & $10\%$ & $0.6$ & 0.62~/~-- & 0.15 & 1.00~/~-- & 0.39 & 0.63~/~-- & 0.207 & 1.00~/~-- & 0.335 & 1.00~/~-- & 0.39 & 1.00~/~-- & 0.391 & 1.00~/~-- & 0.391 \\
\multicolumn{1}{l|}{} & \multicolumn{1}{l}{} & $1$ & 0.5072~/~-- & 0.097 & 1.00~/~-- & 0.39 & 0.5338~/~-- & 0.15 & 1.00~/~-- & 0.303 & 1.00~/~-- & 0.391 & 1.00~/~-- & 0.391 & 1.00~/~-- & 0.248 \\ \hline
\end{tabular}
\vspace{-10pt}
\end{table*}

\subsubsection{Detection Results}
We have similar observations across all three datasets and only show AUC values of detection for SW and CFOs with the dataset \textit{Taxi} in Table~\ref{tab:auc_table_SW} and Table~\ref{tab:auc_table_CFO} respectively due to space limit. The rest are shown in Table~\ref{tab:full_auc_table_SW} and \ref{tab:full_auc_table_CFO} in Appendix~\ref{app:detection_supplementary_result}.
In addition, the defender often cares about the overall impact of the attack in practice. Thus, we also compute the average \ASG of
the attack to show the absolute shift of the distribution. We have the following key observations.

\begin{itemize}[leftmargin=*]
    \item Overall, the results demonstrate our detection is effective.
    \begin{itemize}[leftmargin=*]
        \item For SW, all AUC values are larger than 0.95 for $\beta \geq 2.5\%$, which demonstrates an excellent detection rate by generating the benchmark and capturing the deviation of polluted results from the benchmark. The AUC drops when $\beta$ is small (e.g., $\beta = 1\%$). However, the detection still works well for attacks on the right-most bin, on the range $[1+\frac{2}{3}b, 1+b]$ and range $[1, 1+b]$. This is because the fake values in these attacks are concentrated on a small region, leading to an obvious statistical abnormality.

        \item For CFO protocols, our detection significantly outperforms MUD. MUD fails to detect the attack except for only large $\beta$ and small $\epsilon$ but with a very constrained detection performance (AUC $\leq 0.575$). On the contrary, our method performs consistently across various settings of $\beta$ and $\epsilon$. In particular, for GRR, OUE and CFOs in the \textit{User} setting, the AUC is over 0.96 in most cases. 
    \end{itemize}

    \item The AUC of GRR, OUE, OUE-Pad and other CFO protocols under \textit{User} setting is significantly larger than that of CFO protocols under \textit{Server} setting. This is because the attacker can only forge the value but cannot control the hash function in \textit{Server} setting, thus limited information available for detection. Our detection can still identify the attack in the \textit{Server} setting with large $\beta$ (e.g., $\beta = 10\%$).

    \item By the extra padded bits, OUE-Pad achieves lower \ASG than OUE, but our detection still has high AUC on OUE-Pad. This is because OUE-Pad does not concentrate on the right-most bin and the padding bits is uniformly sampled instead of following OUE perturbation probability and thus the recovered data cannot reproduce the polluted report, leading to a distinction between $g_{det}$ and $g_{ben}$ in detection.
    
    \item MUD performs poorly and can only detect the attack with small $\epsilon = 0.2$ and large $\beta = 10\%$. This is because it compares the noisy report with the extreme case and thus the threshold $\tau$ is too large to detect the attack for small $\beta$ and large $\epsilon$.
    
    \item The AUC of the detection against the attack on $[1-b, 1+b]$ in SW is lower than those adopting other attack strategies, especially for small $\beta$. The reason is that the distribution of fake values in this range is close to the output probability distribution of SW. Therefore, this type of attack is closer to the baseline and its AUC is the smallest.

    \item The AUC for OLH-User is lower than that for HST-User, GRR, OUE and OUE-Pad. This is because fake values in HST-User, GRR and OUE only support the right-most bin and OUE-Pad contains dummy bits following a different distribution from OUE perturbation probability, leading to higher statistical anomaly than in OLH-User where the crafted values support a set of bins at the right end.
    
    \item As $\epsilon$ grows, \ASG reduces and the detection performs worse. This is because larger $\epsilon$ results in less noise, thus making the attack closer to the baseline.
    
    \item The \ASG and AUC are getting greater with increased $\beta$ because more compromised users facilitate the distribution manipulation, thus increasing the distance between the reported bogus data and the reconstructed result, which deviates farther from the benchmark.
    
    \item In general, higher \ASG leads to higher AUC but AUC is not solely determined by the attack gain. It is also correlated with $\beta$ and $\epsilon$ at the same time. Given a fixed \ASG, we observe that the detection rate can be different. For example, the \ASG of attack range $[1-b, 1+b]$ is about 0.045 under $\epsilon = 0.2, \beta = 1\%$ and $\epsilon = 0.6, \beta = 2.5\%$ on dataset \textit{Taxi}. However, the AUC is 0.9872 for $\beta = 2.5\%$ but is only 0.8592 for $\beta = 1\%$.
    
    \item Overall, the AUC with dataset \textit{Taxi} is higher than the other two datasets. This is because, on \textit{Taxi}, the probability density of the region near the right end of the domain is high. Thus, the frequencies of polluted bins are still high even after averaging in EMS, leading to higher anomaly.
\end{itemize}


\begin{figure*}[!htbp]
    \centering
    \includegraphics[scale=0.35]{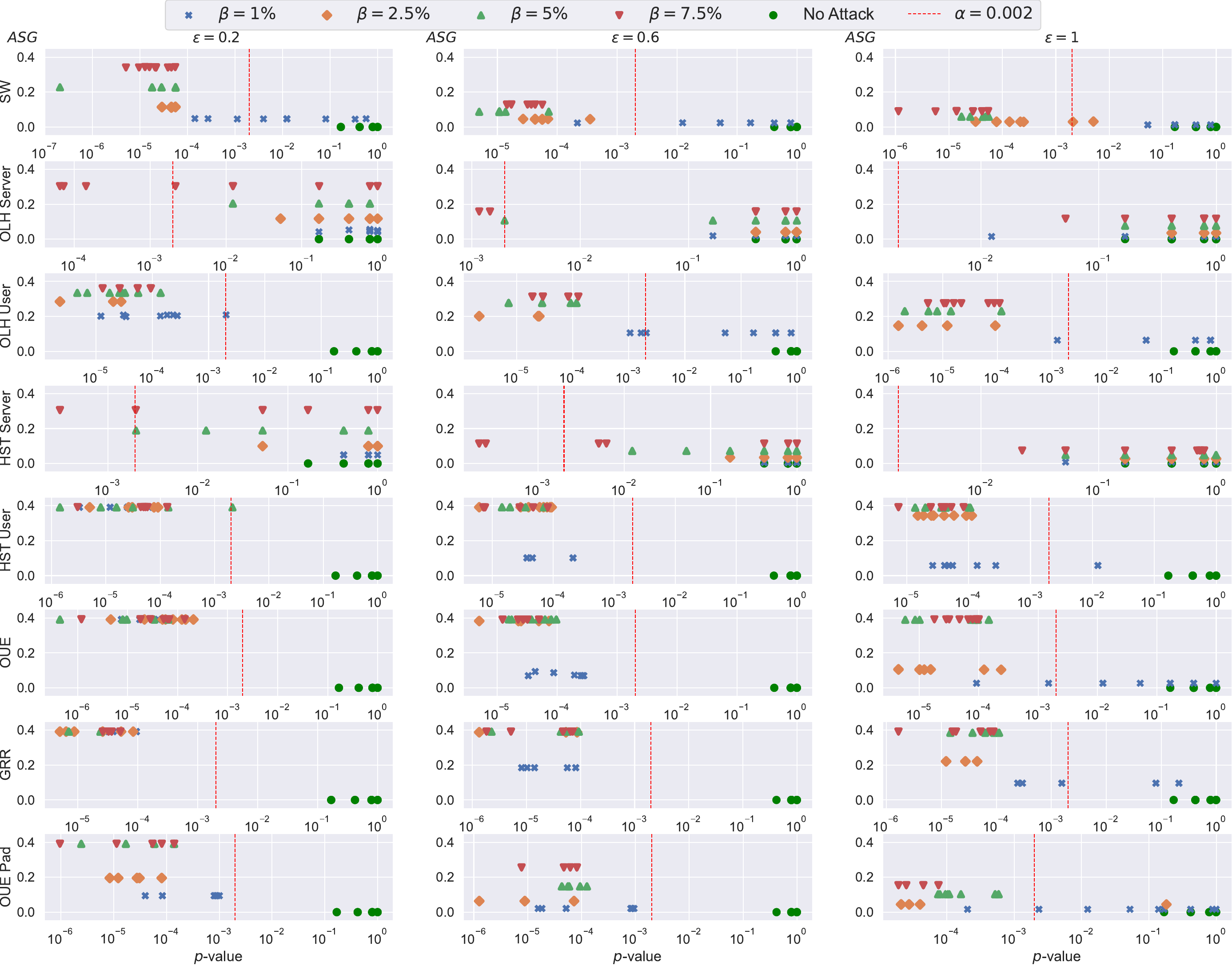}
    \caption{Relationship between shift gain and $p$-value with \textit{Taxi}. Each row represents a protocol with varying $\epsilon$.} 
    \label{fig:pvalue_mean_gain}
    \vspace{-10pt}
\end{figure*}

\subsubsection{Relationship between \ASG and $p$-value}

Despite the superior overall performance of our detection compared to the existing method, the results show that the proposed method cannot effectively detect the attack for small $\beta$ on SW and does not achieve high AUC for CFO in the \textit{Server} setting. In this subsection, we study the attack efficacy versus detection performance. To this end, we analyze the relationship between the \ASG and the $p$-value in the detection. To achieve a low false positive rate, we set the $\alpha = 0.002$. We also select the best attack strategy with the highest \ASG/\SGR for SW under each parameter setting. By repeating the attack 10 times, we use the scatterplot to show the \ASG distribution as the function of $p$-value in detection. We have similar observations across all three datasets and thus only show the results with \textit{Taxi} in Figure~\ref{fig:pvalue_mean_gain}.  
\begin{itemize}[leftmargin=*]
    \item For SW, there are two key observations:
    \begin{itemize}[leftmargin=*]
        \item When the number of malicious users is sufficient (e.g., $\beta \geq 5\%$) to contribute to a high \ASG, the $p$-value in detection is significantly smaller than that of \textit{No Attack}. This is because a large number of fake values give rise to significant anomalies.
    
        \item When the attack bypasses the detection (i.e., the $p$-value is greater than $\alpha$) with a small $\beta$, the \ASG is close to \textit{No Attack} because of the diminished attack effectiveness.
    \end{itemize}

    \item For CFO protocols:
    \begin{itemize}[leftmargin=*]
        \item With HST and OLH in the \textit{User} setting, GRR, OUE and OUE-Pad, the attack makes a limited impact on the final estimate (e.g., $\ASG \leq 0.1$) when it evades the detection.

        \item The attack in the \textit{Server} setting tends to evade the detection with non-negligible \ASG. This is because the attacker can only manipulate limited information. On the flip side, the lack of sufficient information prevents identifying the attack effectively. As a result, it is challenging to detect the data manipulation in the \textit{Server} setting.
    \end{itemize}
\end{itemize}

\noindent\textbf{Why our detection is better than MUD?}
Our detection method utilizes the distribution information to construct an effective benchmark, allowing precise identification of skewness in noisy reports. In contrast, MUD relies solely on checking the frequencies of specific bins. It is effective when the attacker targets a set of bins since it is unlikely that these bins are all supported by a group of users. However, our attack on CFO protocols targets only the right-most bin, leading to statistically insignificant anomalies. Consequently, MUD must employ a large threshold to mitigate false positives, which requires control over an exceptionally large user base and results in poor performance for small $\beta$.


In the \textit{Server} setting, the attacker can only forge values and lacks control over the hash function. Detection can only rely on the frequencies of specific bins. However, our approach utilizes the distribution information across all bins, creating a robust detection benchmark. Unlike MUD, our method exploits correlations and inherent statistical properties beyond bin frequencies, thus achieving better detection.

\noindent\textbf{Summary.}
Our evaluation unveils a bittersweet result for CFO-based protocols. For GRR, OUE and CFOs in the \textit{User} setting, they are naturally more vulnerable to the data poisoning attack. However, the high attack anomaly with these protocols also enables our detection to effectively identify the threat. On the other hand, CFOs in the \textit{Server} setting are more robust to data manipulation but suffer from a low detection rate. In contrast, SW shows good resilience against the attack and offers rich distribution information to facilitate the detection. Meanwhile, prior research shows that better utility for numerical data is obtained with SW compared to CFO-based protocols \cite{li2020estimating}. Therefore, SW would be a preferable protocol with balanced security, privacy, and utility expectations. We expect our findings will still hold in the scenarios where the attack is launched towards other regions of the distribution since the attack intuition remains to increase the distribution density within the target region.




\section{Discussion and Future Direction}

\noindent\textbf{ASG/SGR for LDP Robustness Improvement.}
Security has become a critical factor when we evaluate LDP protocols in addition to utility assessment. ASG and SGR as stronger indicators of attack resilience can be adopted in concert with utility metrics, such as mean square error (MSE) to offer a more comprehensive evaluation of LDP in a hostile environment. Further, ASG/SGR can provide insights into robust LDP design. For example, the optimal number $g$ in OLH \cite{wang2017locally} is determined by minimizing the MSE of LDP noise while our analysis shows that $g$ also has an impact on the protocol robustness. As a result, we may further  design a more robust OLH by finding a $g$ that minimizes attack efficacy informed by ASG/SGR along with the utility optimization.




\noindent\textbf{Limitation of Our Detection and Alternative Mitigation.}
The detection needs $m$ noisy results to produce $g_{det}$ and $g_{ben}$, which may incur a time cost that does not favor real-time applications, especially when a large number of users are involved. The proposed detection method may not be directly applicable to categorical data since it is built on Wasserstein distance for numerical settings.



In addition to detection, attack recovery is another important defense strategy, especially when the threat is persistent or recollecting data is impossible \cite{li2022fine, du2023differential, sun2024ldprecover, huang2024ldpguard}. In this case, accurate detection can provide rich information to help restore damaged utility by suppressing the attack impact on the result. We leave this to our future work.


\noindent\textbf{Robustness of Shuffler-based LDP.}
Shuffler may further improve robustness.
Shuffler-based LDP protocols \cite{balle2019privacy, wang13improving}  deploy a shuffler to break the link between users and their reports. Intuitively, this anonymity can provide a better privacy-utility trade-off because users can add less noise while achieving the same level of privacy. Therefore, the results recovered by aggregation are close to the input values, and the attack efficacy could be close to the baseline attack. Our detection framework could apply to shuffler-based protocols for numerical data \cite{cheu2021differentially}. Although the shuffler conceals the user's identity, it does not lose the distribution information. Therefore, the server can still recover the data distribution from the shuffled noisy reports and reconstruct the noisy results to capture the attack.





\noindent\textbf{Attack Detection for Graph Statistics Estimation.}
In graph data mining, LDP \cite{ye2020lf} focuses on estimating the modularity, structural similarity, etc. Node degree and adjacency bit vector are two atomic graph metrics. To manipulate the statistics of a graph, the attacker may inject a false adjacency bit vector to pollute the node degree and adjacent matrix, and further manipulate the statistics. The server can derive the degree distribution from the adjacent matrix and our detection could synthesize a graph by graph generation model \cite{qin2017generating}. By performing a reconstruction and a hypothesis test, the attack is likely to be detected.

\section{Conclusion}


Studying the robustness of LDP protocols under data poisoning attacks is a critical first step toward restoring the security and reliability of LDP protocols in practice. In this paper, we investigate the impact of malicious data manipulation on the state-of-the-art LDP protocols for numerical data estimation and find that not all of these protocols are equally robust against the concerned threats. The \textit{Server} setting for CFO with binning and consistency provides better robustness than the \textit{User} setting. CFO-\textit{Server} along with SW is the most robust under our data poisoning attack while SW is preferable with better detection sensitivity and utility. Our research further advances the prior knowledge and reveals new relationships between LDP design choices and robustness. 

\section{Acknowledgment}
We would like to thank the Shepherd and anonymous reviewers for their insightful comments and guidance. Wenhai Sun was supported in part by NSF grant CNS-2238680.

\bibliography{ref}
\bibliographystyle{IEEEtran}
\begin{appendices}
\begin{figure*}[!htb]
    \centering
    \includegraphics[scale = 0.35]{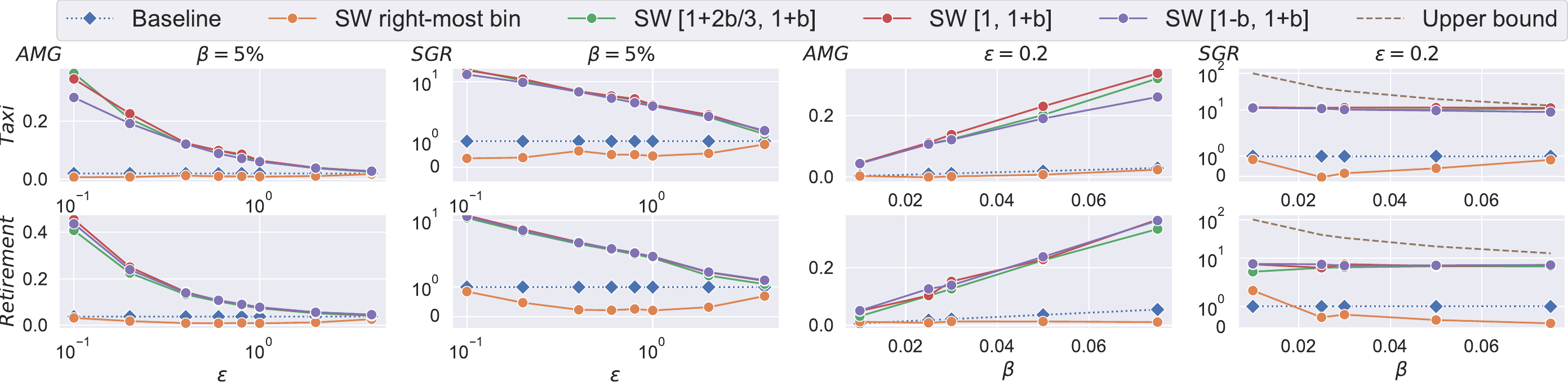}
    \caption{Attack results for SW mechanism with \textit{Taxi} and \textit{Retirement} datasets, varying $\epsilon$ from $0.1$ to $4$. The fake values are injected into 1) the right-most bin, 2) range $[1+\frac{2b}{3}, 1+b]$, 3) range $[1, 1+b]$ and 4) range $[1-b, 1+b]$.}
    \label{fig:SW_bin_eps_beta_full_results}
\end{figure*}

\section{Proof of Theorem \ref{the:security_g_relationship}} \label{app:proof_security_g_relationship}
\begin{proof}
    According to \cite{wang2017locally}, HST and OLH are equivalent for $g=2$. Therefore we only study the OLH and the results can be applied to HST. \SGR is proportional to \ASG and thus we only study \ASG and the conclusion is also applicable to \SGR. Since local hashing-based mechanisms can be implemented by two settings, we first study the \textit{User} case. Ideally, the fake report $\langle \hat{H}, \hat{y} \rangle$ contributes only to the estimated frequency of the largest value in the $m_o$-th bin. To study the relationship between $g$ and \ASG, we need to analyze the expected shift gain after the attack. The idea is to study frequency change on each bin after the attack and aggregate them to obtain the shift gain. The term $C(B_i)$ counts the number of noisy data falling in the $i$-th bin $B_i$, and it can also be presented by indicating function as $\sum_{j=1}^{n} I_{y^{(j)}}(i)$, where $I_{y^{(j)}}(i)$ equals 1 if $H^{(j)}(i) = \hat{x}_b^{(j)}$. Denote the true frequency of the $i$-th bin as $f_i$, the excepted frequency change $\Delta f_i$ for the $i$-th ($i \neq m_o$) bin is
    \begin{align*}
        &\mathbb{E}(\Delta f_i) = \mathbb{E} \Bigg[ \frac{\sum_{j=1}^{(1-\beta)n} I_{y^{(j)}}(i) + \sum_{j=1}^{\beta n} I_{\hat{y}^{(j)}}(i) - nq}{n(p-q)} \\
        &- \frac{\sum_{j=1}^{n} I_{y^{(j)}}(i) - nq}{n(p-q)} \Bigg] \\
        &= \frac{1}{n(p-q)} \mathbb{E} \left[ n(q-p)f_i \beta - q \beta \right] = -f_i \beta - \frac{\beta}{g-2}
    \end{align*}
    The second equality is because for $i \neq m_o$, the $I_{y^{(j)}}(i)$ is 1 with probability $p$ and 0 with probability $q$, and the $I_{\hat{y}^{(j)}}(i)$ for compromised user is 0 ideally.
    For the $m_o$-th bin, the expected frequency change is
    \begin{align*}
        &\mathbb{E}(\Delta f_{m_o}) \\
        &= \mathbb{E} \left[ \frac{\sum_{j=1}^{(1-\beta)n} I_{y^{(j)}}(i) + \sum_{j=1}^{\beta n} 1 - nq}{n(p-q)} - \frac{\sum_{j=1}^{n} I_{y^{(j)}}(i) - nq}{n(p-q)} \right] \\
        &= \frac{\beta}{p-q} - \beta f_{m_o} - \frac{\beta}{(p-q) g}
    \end{align*}
    Thus the expected \ASG is $\sum_{v=1}^{m_o} \sum_{i=1}^{v} \mathbb{E}(\Delta f_i)$, which is the linear combination of $\mathbb{E}(\Delta f_i) (\forall i \in [m_o])$ and is proportional to $-\frac{1}{g}$. Because the derivative of every $\mathbb{E}(\Delta f_i)$ with respect to $g$ is positive, $\sum_{v=1}^{m_o} \sum_{i=1}^{v} \mathbb{E}(\Delta f_i)$ increases as $g$ grows.
    
    The proof is the same for \textit{Server} setting except that the term $\sum_{j=1}^{\beta n} I_{\hat{y}^{(j)}}(i)$ is 0 with probability $\frac{g-1}{g}$ while 1 with probability $\frac{1}{g}$.
    This is because the hash is picked by the server and each hash maps input to domain $[g]$ uniformly at random. Then we have the expected shift gain as
    \begin{align*}
        &\mathbb{E}(\Delta f_i) = -\beta f_i \quad (i \neq m_o) \\
        &\mathbb{E}(\Delta f_{m_o}) = \frac{\beta}{p-q} - \beta f_{m_o} - \frac{\beta}{(p-q) g}
    \end{align*}
    Since the derivative of every $\mathbb{E}(\Delta f_i)$ with respect to $g$ is still positive, $\sum_{v=1}^{m_o} \sum_{i=1}^{v} \mathbb{E}(\Delta f_i)$ also increases as $g$ grows under \textit{Server} setting.
\end{proof}

\section{Data Distribution and Supplemental Attack Results}
\label{app:supplemental_experiment_results}



The data distribution is shown in Figure~\ref{fig:data_dist}. As the supplemental experimental results of Figure~\ref{fig:SW_bin_eps_beta} in Section~\ref{sec:attack_experiments}, Figure~\ref{fig:SW_bin_eps_beta_full_results} shows the attack results on SW mechanism with the rest two datasets \textit{Taxi} and \textit{Retirement}. We observe a similar result to that with dataset $\mathcal{N}(0, 10)$. Except for injecting into the right-most bin, injecting fake values into the range $[1+\frac{2b}{3}, 1+b]$, range $[1, 1+b]$ and range $[1-b, 1+b]$ perform similarly since they all increase the probability density region of the right end of the domain.
\begin{figure}
    \centering
    \includegraphics[scale = 0.33]{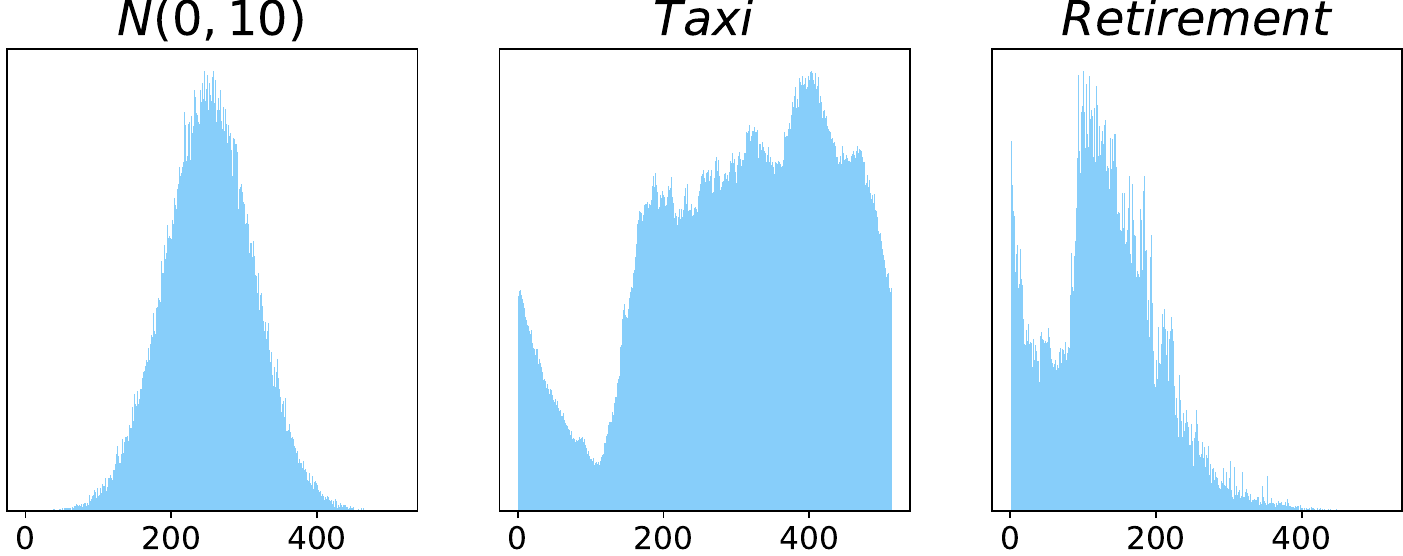}
    \caption{Datasets Distribution}
    \label{fig:data_dist}
\end{figure}







\section{Two-sample Kolmogorov–Smirnov Test} \label{app:two_sample_KS_test}
Two-sample Kolmogorov–Smirnov (KS) test \cite{massey1951kolmogorov} is used to test if two groups of samples follow the same distribution. It claims two statistical hypotheses. The \textit{null hypothesis} $\mathcal{H}_0$ states that the two groups of samples follow the same distribution, while the \textit{alternative hypothesis} $\mathcal{H}_1$ claims that the distributions of two groups are different.

Two-sample KS test has Type \uppercase\expandafter{\romannumeral1} error and Type \uppercase\expandafter{\romannumeral2} error. Type \uppercase\expandafter{\romannumeral1} error occurs when the $\mathcal{H}_0$ is true but the test mistakenly rejects it. Type \uppercase\expandafter{\romannumeral2} occurs if the $\mathcal{H}_0$ is falsely accepted when it is not true. In most problems, Type \uppercase\expandafter{\romannumeral1} error is important \cite{casella2021statistical}, and analyzers usually require the probability of Type \uppercase\expandafter{\romannumeral1} error be at most the specified small \textit{significance level} $\alpha~(0 \leq \alpha \leq 1)$. 
In order to allow users to control the error probability, the KS test returns a probabilistic estimate \textit{p-value} $p$ to measure how likely the null hypothesis is true.
The null hypothesis will be rejected if $p$ is less than the specified $\alpha$.

Two-sample KS test has three primary steps. Let $X_1 = [x_1^{(i)}]_{i=1}^{n}$ and $X_2 = [x_2^{(i)}]_{i=1}^{m}$ be two groups of samples to be tested. KS test first calculates the empirical cumulative distribution function $F_{X_1}(x)$ and $F_{X_2}(x)$ of $X_1$ and $X_2$. Then KS test computes the test statistic $S = \sup_{x} \left| F_{X_1}(x) - F_{X_2}(x) \right|$ to describe how much the test samples differ from the null hypothesis. 
Given the statistic $S$, KS test can finally derive $p$-value as $p = 2e^{-2S^2 \frac{mn}{m+n}}$~\cite{massey1951kolmogorov} and reject $\mathcal{H}_0$ if $p \leq \alpha$.

\section{Detection Details and Supplemental Results} \label{app:detection_supplementary_result}

The full detection results of SW and CFOs are shown in Table~\ref{tab:full_auc_table_SW} and \ref{tab:full_auc_table_CFO} below.

\begin{table*}[!htbp]
\centering
\scriptsize
\caption{AUC Values of Attack Detection on SW}
\label{tab:full_auc_table_SW}
\begin{tabular}{lcccccccccc}
\hline
\multirow{2}{*}{Dataset} & \multirow{2}{*}{$\beta$} & \multirow{2}{*}{$\epsilon$} & \multicolumn{2}{c}{SW right-most bin} & \multicolumn{2}{c}{SW [1+2/3b, 1+b]} & \multicolumn{2}{c}{SW [1, 1+b]} & \multicolumn{2}{c}{SW [1-b, 1+b]} \\
 &  &  & AUC & \ASG & AUC & \ASG & AUC & \ASG & AUC & \ASG \\ \hline
\multicolumn{1}{l|}{} & \multicolumn{1}{l}{} & $0.2$ & 1.00 & 0.00502 & 0.9968 & 0.0177 & 0.9048 & 0.0414 & 0.6168 & 0.0361 \\
\multicolumn{1}{l|}{} & $1\%$ & $0.6$ & 1.00 & 0.00075 & 0.9968 & 0.0141 & 0.8984 & 0.0146 & 0.5952 & 0.0134 \\
\multicolumn{1}{l|}{} & \multicolumn{1}{l}{} & $1$ & 1.00 & 0.00039 & 0.9888 & 0.0107 & 0.8368 & 0.0126 & 0.5928 & 0.0116 \\ \cline{2-11} 
\multicolumn{1}{l|}{} &  & $0.2$ & 1.00 & 0.00329 & 1.00 & 0.0996 & 1.00 & 0.1049 & 0.9940 & 0.077 \\
\multicolumn{1}{l|}{} & $2.5\%$ & $0.6$ & 1.00 & 0.000103 & 1.00 & 0.0415 & 1.00 & 0.0418 & 0.9718 & 0.0376 \\
\multicolumn{1}{l|}{} &  & $1$ & 1.00 & -0.00036 & 1.00 & 0.0294 & 0.9996 & 0.0301 & 0.9574 & 0.0269 \\ \cline{2-11} 
\multicolumn{1}{l|}{} &  & $0.2$ & 1.00 & 0.00255 & 1.00 & 0.2117 & 1.00 & 0.2272 & 0.9972 & 0.1685 \\
\multicolumn{1}{l|}{$N(0, 10)$} & $5\%$ & $0.6$ & 1.00 & 0.00177 & 1.00 & 0.0751 & 1.00 & 0.0813 & 0.9970 & 0.0764 \\
\multicolumn{1}{l|}{} &  & $1$ & 1.00 & 0.00595 & 1.00 & 0.058 & 1.00 & 0.0608 & 0.9900 & 0.05468 \\ \cline{2-11}
\multicolumn{1}{l|}{} &  & $0.2$ & 1.00 & 0.00951 & 1.00 & 0.3126 & 1.00 & 0.3297 & 0.9994 & 0.2396 \\
\multicolumn{1}{l|}{} & $7.5\%$ & $0.6$ & 1.00 & 0.00923 & 1.00 & 0.1141 & 1.00 & 0.1197 & 0.9983 & 0.1146 \\
\multicolumn{1}{l|}{} &  & $1$ & 1.00 & 0.0146 & 1.00 & 0.0832 & 1.00 & 0.0895 & 0.9976 & 0.08205 \\ \cline{2-11}
\multicolumn{1}{l|}{} &  & $0.2$ & 1.00 & 0.0102 & 1.00 & 0.4259 & 1.00 & 0.4517 & 1.00 & 0.3269 \\
\multicolumn{1}{l|}{} & $10\%$ & $0.6$ & 1.00 & 0.02301 & 1.00 & 0.1527 & 1.00 & 0.1559 & 1.00 & 0.1487 \\
\multicolumn{1}{l|}{} & \multicolumn{1}{l}{} & $1$ & 1.00 & 0.0237 & 1.00 & 0.1099 & 1.00 & 0.1158 & 1.00 & 0.1095 \\ \hline
\multicolumn{1}{l|}{} & \multicolumn{1}{l}{} & $0.2$ & 1.00 & 0.00289 & 1.00 & 0.0404 & 1.00 & 0.04536 & 0.8592 & 0.04571 \\
\multicolumn{1}{l|}{} & $1\%$ & $0.6$ & 1.00 & 0.00199 & 1.00 & 0.0224 & 1.00 & 0.0239 & 0.685 & 0.0179 \\
\multicolumn{1}{l|}{} & \multicolumn{1}{l}{} & $1$ & 1.00 & 0.00084 & 1.00 & 0.0127 & 1.00 & 0.0131 & 0.56 & 0.0122 \\ \cline{2-11} 
\multicolumn{1}{l|}{} &  & $0.2$ & 1.00 & 0.00364 & 1.00 & 0.1042 & 1.00 & 0.1132 & 0.9996 & 0.1052 \\
\multicolumn{1}{l|}{} & $2.5\%$ & $0.6$ & 1.00 & 0.00362 & 1.00 & 0.0569 & 1.00 & 0.0571 & 0.9872 & 0.0447 \\
\multicolumn{1}{l|}{} &  & $1$ & 1.00 & 0.00215 & 1.00 & 0.03129 & 1.00 & 0.0329 & 0.9840 & 0.03037 \\ \cline{2-11} 
\multicolumn{1}{l|}{} &  & $0.2$ & 1.00 & 0.01114 & 1.00 & 0.2056 & 1.00 & 0.2284 & 1.00 & 0.1915 \\
\multicolumn{1}{l|}{Taxi} & $5\%$ & $0.6$ & 1.00 & 0.00962 & 1.00 & 0.0981 & 1.00 & 0.0991 & 1.00 & 0.0891 \\
\multicolumn{1}{l|}{} &  & $1$ & 1.00 & 0.00807 & 1.00 & 0.0614 & 1.00 & 0.0644 & 1.00 & 0.0601 \\ \cline{2-11} 
\multicolumn{1}{l|}{} &  & $0.2$ & 1.00 & 0.0171 & 1.00 & 0.3155 & 1.00 & 0.3416 & 1.00 & 0.2615 \\
\multicolumn{1}{l|}{} & $7.5\%$ & $0.6$ & 1.00 & 0.0213 & 1.00 & 0.1324 & 1.00 & 0.1352 & 1.00 & 0.1286 \\
\multicolumn{1}{l|}{} &  & $1$ & 1.00 & 0.0176 & 1.00 & 0.091 & 1.00 & 0.0959 & 1.00 & 0.0896 \\ \cline{2-11} 
\multicolumn{1}{l|}{} &  & $0.2$ & 1.00 & 0.02981 & 1.00 & 0.3859 & 1.00 & 0.3707 & 1.00 & 0.3145 \\
\multicolumn{1}{l|}{} & $10\%$ & $0.6$ & 1.00 & 0.0363 & 1.00 & 0.1642 & 1.00 & 0.1696 & 1.00 & 0.1652 \\
\multicolumn{1}{l|}{} & \multicolumn{1}{l}{} & $1$ & 1.00 & 0.02907 & 1.00 & 0.1218 & 1.00 & 0.1269 & 1.00 & 0.1186 \\ \hline
\multicolumn{1}{l|}{} & \multicolumn{1}{l}{} & $0.2$ & 1.00 & 0.01242 & 1.00 & 0.02808 & 0.9856 & 0.0441 & 0.6256 & 0.043 \\
\multicolumn{1}{l|}{} & $1\%$ & $0.6$ & 1.00 & 0.00356 & 1.00 & 0.0234 & 0.985 & 0.0209 & 0.605 & 0.0198 \\
\multicolumn{1}{l|}{} & \multicolumn{1}{l}{} & $1$ & 1.00 & 0.00174 & 1.00 & 0.0123 & 0.9816 & 0.0167 & 0.5624 & 0.0157 \\ \cline{2-11} 
\multicolumn{1}{l|}{} &  & $0.2$ & 1.00 & 0.0053 & 1.00 & 0.1017 & 1.00 & 0.1203 & 0.9630 & 0.1234 \\
\multicolumn{1}{l|}{} & $2.5\%$ & $0.6$ & 1.00 & 0.00384 & 1.00 & 0.0516 & 1.00 & 0.0565 & 0.9612 & 0.0498 \\
\multicolumn{1}{l|}{} &  & $1$ & 1.00 & 0.00331 & 1.00 & 0.0371 & 1.00 & 0.0387 & 0.9588 & 0.0382 \\ \cline{2-11} 
\multicolumn{1}{l|}{} &  & $0.2$ & 1.00 & 0.00821 & 1.00 & 0.2243 & 1.00 & 0.2487 & 0.9816 & 0.2425 \\
\multicolumn{1}{l|}{Retirement} & $5\%$ & $0.6$ & 1.00 & 0.0089 & 1.00 & 0.0992 & 1.00 & 0.1072 & 0.9778 & 0.1025 \\
\multicolumn{1}{l|}{} &  & $1$ & 1.00 & 0.00533 & 1.00 & 0.0724 & 1.00 & 0.0781 & 0.9744 & 0.0762 \\ \cline{2-11} 
\multicolumn{1}{l|}{} &  & $0.2$ & 1.00 & 0.0121 & 1.00 & 0.3342 & 1.00 & 0.3637 & 1.00 & 0.3611 \\
\multicolumn{1}{l|}{} & $7.5\%$ & $0.6$ & 1.00 & 0.0185 & 1.00 & 0.1553 & 1.00 & 0.1606 & 1.00 & 0.1587 \\
\multicolumn{1}{l|}{} &  & $1$ & 1.00 & 0.0172 & 1.00 & 0.1098 & 1.00 & 0.1148 & 0.994 & 0.1176 \\ \cline{2-11} 
\multicolumn{1}{l|}{} &  & $0.2$ & 1.00 & 0.0236 & 1.00 & 0.4505 & 1.00 & 0.4938 & 1.00 & 0.4725 \\
\multicolumn{1}{l|}{} & $10\%$ & $0.6$ & 1.00 & 0.0297 & 1.00 & 0.2052 & 1.00 & 0.21005 & 1.00 & 0.2096 \\
\multicolumn{1}{l|}{} & \multicolumn{1}{l}{} & $1$ & 1.00 & 0.03055 & 1.00 & 0.1465 & 1.00 & 0.1519 & 0.998 & 0.1563 \\ \hline
\end{tabular}
\end{table*}

\begin{table*}[!htbp]
\setlength\tabcolsep{1pt}
\centering
\scriptsize
\caption{AUC Values of Attack Detection for CFO. The AUC of MUD is shown in brackets and use ``--" if the AUC value is zero.}
\label{tab:full_auc_table_CFO}
\begin{tabular}{lcccccccccccccccc}
\hline
\multirow{2}{*}{Dataset} & \multirow{2}{*}{$\beta$} & \multirow{2}{*}{$\epsilon$} & \multicolumn{2}{c}{HST-Server} & \multicolumn{2}{c}{HST-User} & \multicolumn{2}{c}{OLH-Server} & \multicolumn{2}{c}{OLH-User} & \multicolumn{2}{c}{OUE} & \multicolumn{2}{c}{GRR} & \multicolumn{2}{c}{OUE-Pad} \\
 &  &  & Ours~/~MUD & \ASG & Ours~/~MUD & \ASG & Ours~/~MUD & \ASG & Ours~/~MUD & \ASG & Ours~/~MUD & \ASG & Ours~/~MUD & \ASG & Ours~/~MUD & \ASG \\ \hline
\multicolumn{1}{l|}{} & \multicolumn{1}{l}{} & $0.2$ & 0.612~/~-- & 0.05 & 0.8928~/~-- & 0.137 & 0.5483~/~-- & 0.036 & 0.9822~/~-- & 0.1565 & 0.9006~/~-- & 0.129 & 1.00~/~-- & 0.493 & 0.866 ~/~-- & 0.0476 \\
\multicolumn{1}{l|}{} & $1\%$ & $0.6$ & 0.57~/~-- & 0.0179 & 0.6816~/~-- & 0.034 & 0.4967~/~-- & 0.0267 & 0.8656~/~-- & 0.0751 & 0.5456~/~-- & 0.0341 & 0.71~/~-- & 0.204 & 0.51 ~/~-- & 0.0297 \\
\multicolumn{1}{l|}{} & \multicolumn{1}{l}{} & $1$ & 0.4322~/~-- & 0.012 & 0.6096~/~-- & 0.021 & 0.4248~/~-- & 0.0181 & 0.8504~/~-- & 0.046 & 0.5352~/~-- & 0.021 & 0.62~/~-- & 0.077 & 0.49 ~/~-- & 0.0206 \\ \cline{2-17} 
\multicolumn{1}{l|}{} &  & $0.2$ & 0.6394~/~-- & 0.10447 & 1.00~/~-- & 0.493 & 0.5528~/~-- & 0.1025 & 1.00~/~-- & 0.3273 & 0.9506~/~-- & 0.493 & 1.00~/~-- & 0.493 & 0.92 ~/~-- & 0.142 \\
\multicolumn{1}{l|}{} & $2.5\%$ & $0.6$ & 0.609~/~-- & 0.04236 & 1.00~/~-- & 0.11669 & 0.5072~/~-- & 0.0565 & 1.00~/~-- & 0.16026 & 0.9422~/~-- & 0.0856 & 0.98~/~-- & 0.46 & 0.89 ~/~-- & 0.061 \\
\multicolumn{1}{l|}{} &  & $1$ & 0.4694~/~-- & 0.0252 & 0.9717~/~-- & 0.066 & 0.4667~/~-- & 0.041 & 0.92~/~-- & 0.109 & 0.5933~/~-- & 0.053 & 0.76~/~-- & 0.224 & 0.54 ~/~-- & 0.043 \\ \cline{2-17} 
\multicolumn{1}{l|}{} &  & $0.2$ & 0.7128~/~-- & 0.222 & 1.00~/~-- & 0.493 & 0.7192~/~-- & 0.243 & 1.00~/~-- & 0.4395 & 1.00~/~-- & 0.493 & 1.00~/~-- & 0.493 & 1.00 ~/~-- & 0.31 \\
\multicolumn{1}{l|}{$N(0, 10)$} & $5\%$ & $0.6$ & 0.6912~/~-- & 0.077 & 1.00~/~-- & 0.493 & 0.6336~/~-- & 0.106 & 1.00~/~-- & 0.283 & 1.00~/~-- & 0.493 & 1.00~/~-- & 0.493 & 1.00 ~/~-- & 0.118 \\
\multicolumn{1}{l|}{} &  & $1$ & 0.5912~/~-- & 0.051 & 0.9784~/~-- & 0.47 & 0.544~/~-- & 0.078 & 0.9272~/~-- & 0.1897 & 1.00~/~-- & 0.115 & 1.00~/~-- & 0.45 & 1.00 ~/~-- & 0.085 \\ \cline{2-17}
\multicolumn{1}{l|}{} &  & $0.2$ & 0.7706~/~-- & 0.341 & 0.8822~/~-- & 0.493 & 0.7489~/~-- & 0.342 & 1.00~/~-- & 0.457 & 1.00~/~-- & 0.493 & 1.00~/~-- & 0.493 & 1.00 ~/~-- & 0.433 \\
\multicolumn{1}{l|}{} & $7.5\%$ & $0.6$ & 0.7256~/~-- & 0.117 & 1.00~/~-- & 0.493 & 0.8611~/~-- & 0.158 & 1.00~/~-- & 0.3718 & 1.00~/~-- & 0.493 & 1.00~/~-- & 0.493 & 1.00 ~/~-- & 0.178 \\
\multicolumn{1}{l|}{} &  & $1$ & 0.689~/~-- & 0.0795 & 1.00~/~-- & 0.493 & 0.867~/~-- & 0.117 & 0.9312~/~-- & 0.268 & 1.00~/~-- & 0.211 & 1.00~/~-- & 0.493 & 1.00 ~/~-- & 0.127 \\ \cline{2-17}
\multicolumn{1}{l|}{} &  & $0.2$ & 0.9416~/~0.525 & 0.453 & 1.00~/~0.525 & 0.493 & 0.9928~/~0.525 & 0.448 & 1.00~/~0.525 & 0.472 & 1.00~/~0.525 & 0.493 & 1.00~/~-- & 0.493 & 1.00~/~0.525 & 0.493 \\
\multicolumn{1}{l|}{} & $10\%$ & $0.6$ & 0.8944~/~-- & 0.162 & 1.00~/~-- & 0.493 & 0.9208~/~-- & 0.214 & 1.00~/~-- & 0.435 & 1.00~/~-- & 0.493 & 1.00~/~-- & 0.493 & 1.00 ~/~-- & 0.237 \\
\multicolumn{1}{l|}{} & \multicolumn{1}{l}{} & $1$ & 0.7088~/~-- & 0.105 & 1.00~/~-- & 0.493 & 0.9~/~-- & 0.158 & 0.9661~/~-- & 0.331 & 1.00~/~-- & 0.493 & 1.00~/~-- & 0.493 & 1.00 ~/~-- & 0.167 \\ \hline

\multicolumn{1}{l|}{} & \multicolumn{1}{l}{} & $0.2$ & 0.4416~/~-- & 0.0484 & 1.00~/~-- & 0.391 & 0.476~/~-- & 0.047 & 0.9952~/~-- & 0.207 & 1.00~/~-- & 0.391 & 1.00~/~-- & 0.391 & 1.00~/~-- & 0.094  \\
\multicolumn{1}{l|}{} & $1\%$ & $0.6$ & 0.4384~/~-- & 0.0054 & 1.00~/~-- & 0.1014 & 0.4232~/~-- & 0.0189 & 0.6224~/~-- & 0.105 & 1.00~/~-- & 0.071 & 1.00~/~-- & 0.184 & 1.00~/~-- & 0.021 \\
\multicolumn{1}{l|}{} & \multicolumn{1}{l}{} & $1$ & 0.3306~/~-- & 0.0082 & 0.9744~/~-- & 0.058 & 0.392~/~-- & 0.015 & 0.5784~/~-- & 0.0635 & 0.6808~/~-- & 0.0261 & 0.8992~/~-- & 0.096 & 0.6407~/~-- & 0.016  \\ \cline{2-17} 
\multicolumn{1}{l|}{} &  & $0.2$ & 0.4972~/~-- & 0.0992 & 1.00~/~-- & 0.391 & 0.6167~/~-- & 0.117 & 1.00~/~-- & 0.285 & 1.00~/~-- & 0.391 & 1.00~/~-- & 0.391 & 1.00~/~-- & 0.195 \\
\multicolumn{1}{l|}{} & $2.5\%$ & $0.6$ & 0.4872~/~-- & 0.0343 & 1.00~/~-- & 0.392 & 0.4933~/~-- & 0.041 & 1.00~/~-- & 0.201 & 1.00~/~-- & 0.382 & 1.00~/~-- & 0.387 & 1.00~/~-- & 0.063 \\
\multicolumn{1}{l|}{} &  & $1$ & 0.3696~/~-- & 0.027 & 1.00~/~-- & 0.344 & 0.4~/~-- & 0.03518 & 1.00~/~-- & 0.146 & 0.9696~/~-- & 0.11 & 1.00~/~-- & 0.22 & 0.913~/~-- & 0.044 \\ \cline{2-17} 
\multicolumn{1}{l|}{} &  & $0.2$ & 0.555~/~-- & 0.188 & 1.00~/~-- & 0.39 & 0.6504~/~-- & 0.204 & 1.00~/~-- & 0.335 & 1.00~/~-- & 0.391 & 1.00~/~-- & 0.391 & 1.00~/~-- & 0.391 \\
\multicolumn{1}{l|}{Taxi} & $5\%$ & $0.6$ & 0.5352~/~-- & 0.072 & 1.00~/~-- & 0.391 & 0.5392~/~-- & 0.1069 & 1.00~/~-- & 0.277 & 1.00~/~-- & 0.391 & 1.00~/~-- & 0.391 & 1.00~/~-- & 0.147 \\
\multicolumn{1}{l|}{} &  & $1$ & 0.4976~/~-- & 0.0486 & 1.00~/~-- & 0.389 & 0.4844~/~-- & 0.077 & 1.00~/~-- & 0.23 & 1.00~/~-- & 0.389 & 1.00~/~-- & 0.385 & 1.00~/~-- & 0.102 \\ \cline{2-17}
\multicolumn{1}{l|}{} &  & $0.2$ & 0.6211~/~-- & 0.296 & 1.00~/~-- & 0.391 & 0.7411~/~-- & 0.31 & 1.00~/~-- & 0.357 & 1.00~/~-- & 0.392 & 1.00~/~-- & 0.391 & 1.00~/~-- & 0.391 \\
\multicolumn{1}{l|}{} & $7.5\%$ & $0.6$ & 0.6089~/~-- & 0.108 & 1.00~/~-- & 0.392 & 0.6494~/~-- & 0.156 & 1.00~/~-- & 0.312 & 1.00~/~-- & 0.391 & 1.00~/~-- & 0.391 & 1.00~/~-- & 0.255 \\
\multicolumn{1}{l|}{} &  & $1$ & 0.5667~/~-- & 0.073 & 1.00~/~-- & 0.39105 & 0.5583~/~-- & 0.1161 & 1.00~/~-- & 0.276 & 1.00~/~-- & 0.391 & 1.00~/~-- & 0.391 & 1.00~/~-- & 0.154 \\ \cline{2-17}
\multicolumn{1}{l|}{} &  & $0.2$ & 0.8432~/~0.575 & 0.382 & 1.00~/~0.575 & 0.392 & 0.7912~/~0.55 & 0.377 & 1.00~/~0.55 & 0.37 & 1.00~/~0.55 & 0.392 & 1.00~/~-- & 0.391 & 1.00~/0.55 & 0.391 \\
\multicolumn{1}{l|}{} & $10\%$ & $0.6$ & 0.62~/~-- & 0.15 & 1.00~/~-- & 0.39 & 0.63~/~-- & 0.207 & 1.00~/~-- & 0.335 & 1.00~/~-- & 0.391 & 1.00~/~-- & 0.391 & 1.00~/~-- & 0.391 \\
\multicolumn{1}{l|}{} & \multicolumn{1}{l}{} & $1$ & 0.5072~/~-- & 0.0976 & 1.00~/~-- & 0.3913 & 0.5338~/~-- & 0.151 & 1.00~/~-- & 0.303 & 1.00~/~-- & 0.391 & 1.00~/~-- & 0.391 & 1.00~/~-- & 0.248 \\ \hline

\multicolumn{1}{l|}{} & \multicolumn{1}{l}{} & $0.2$ & 0.53~/~-- & 0.1473 & 0.705~/~-- & 0.1836 & 0.505~/~-- & 0.138 & 0.61~/~-- & 0.356 & 0.7~/~-- & 0.228 & 1.00~/~-- & 0.731 & 0.65~/~-- & 0.123  \\
\multicolumn{1}{l|}{} & $1\%$ & $0.6$ & 0.475~/~-- & 0.0591 & 0.68~/~-- & 0.04072 & 0.42~/~-- & 0.0452 & 0.52~/~-- & 0.1324 & 0.67~/~-- & 0.051 & 0.81~/~-- & 0.262 & 0.57~/~-- & 0.04 \\
\multicolumn{1}{l|}{} & \multicolumn{1}{l}{} & $1$ & 0.465~/~-- & 0.0262 & 0.59~/~-- & 0.0067 & 0.345~/~-- & 0.025 & 0.405~/~-- & 0.0763 & 0.37~/~-- & 0.0269 & 0.56~/~-- & 0.128 & 0.3~/~-- & 0.023  \\ \cline{2-17} 
\multicolumn{1}{l|}{} &  & $0.2$ & 0.5506~/~-- & 0.204 & 1.00~/~-- & 0.728 & 0.53~/~-- & 0.2679 & 1.00~/~-- & 0.5875 & 0.9089~/~-- & 0.71 & 1.00~/~-- & 0.731 & 0.864~/~-- & 0.23\\
\multicolumn{1}{l|}{} & $2.5\%$ & $0.6$ & 0.535~/~-- & 0.091 & 0.9978~/~-- & 0.281 & 0.52~/~-- & 0.101 & 1.00~/~-- & 0.293 & 0.84~/~-- & 0.127 & 0.91~/~-- & 0.66 & 0.77~/~-- & 0.083 \\
\multicolumn{1}{l|}{} &  & $1$ & 0.5194~/~-- & 0.043 & 0.8733~/~-- & 0.078 & 0.4933~/~-- & 0.0693 & 0.9833~/~-- & 0.168 & 0.6572~/~-- & 0.059 & 0.831~/~-- & 0.332 & 0.603~/~-- & 0.056 \\ \cline{2-17} 
\multicolumn{1}{l|}{} &  & $0.2$ & 0.58~/~-- & 0.358 & 1.00~/~-- & 0.7284 & 0.5906~/~-- & 0.35 & 1.00~/~-- & 0.6694 & 1.00~/~-- & 0.7284 & 1.00~/~-- & 0.731 & 1.00~/~-- & 0.456 \\
\multicolumn{1}{l|}{Retirement} & $5\%$ & $0.6$ & 0.575~/~-- & 0.128 & 1.00~/~-- & 0.726 & 0.565~/~-- & 0.166 & 1.00~/~-- & 0.5842 & 1.00~/~-- & 0.695 & 1.00~/~-- & 0.731 & 1.00~/~-- & 0.151 \\
\multicolumn{1}{l|}{} &  & $1$ & 0.5574~/~-- & 0.0867 & 1.00~/~-- & 0.684 & 0.5402~/~-- & 0.1257 & 1.00~/~-- & 0.347 & 0.9914~/~-- & 0.1641 & 1.00~/~-- & 0.67 & 0.99~/~-- & 0.11 \\ \cline{2-17}
\multicolumn{1}{l|}{} &  & $0.2$ & 0.66~/~-- & 0.53 & 1.00~/~-- & 0.731 & 0.7372~/~-- & 0.527 & 1.00~/~-- & 0.6977 & 1.00~/~-- & 0.731 & 1.00~/~-- & 0.731 & 1.00~/~-- & 0.693 \\
\multicolumn{1}{l|}{} & $7.5\%$ & $0.6$ & 0.6456~/~-- & 0.172 & 1.00~/~-- & 0.731 & 0.63~/~-- & 0.2277 & 1.00~/~-- & 0.656 & 1.00~/~-- & 0.731 & 1.00~/~-- & 0.731 & 1.00~/~-- & 0.248 \\
\multicolumn{1}{l|}{} &  & $1$ & 0.5839~/~-- & 0.1091 & 1.00~/~-- & 0.731 & 0.6656~/~-- & 0.1686 & 1.00~/~-- & 0.5261 & 1.00~/~-- & 0.3577 & 1.00~/~-- & 0.731 & 1.00~/~-- & 0.166 \\ \cline{2-17}
\multicolumn{1}{l|}{} &  & $0.2$ & 0.725~/~0.525 & 0.6786 & 1.00~/~0.525 & 0.7326 & 0.88~/~0.525 & 0.6582 & 1.00~/~0.525 & 0.7142 & 1.00~/~0.525 & 0.732 & 1.00~/~-- & 0.731 & 1.00~/~0.525 & 0.731 \\
\multicolumn{1}{l|}{} & $10\%$ & $0.6$ & 0.71~/~-- & 0.2405 & 1.00~/~-- & 0.732 & 0.8292~/~-- & 0.3216 & 1.00~/~-- & 0.675 & 1.00~/~-- & 0.732 & 1.00~/~-- & 0.731 & 1.00~/~-- & 0.337 \\
\multicolumn{1}{l|}{} & \multicolumn{1}{l}{} & $1$ & 0.5976~/~-- & 0.1588 & 1.00~/~-- & 0.732 & 0.6783~/~-- & 0.231 & 1.00~/~-- & 0.6574 & 1.00~/~-- & 0.729 & 1.00~/~-- & 0.731 & 1.00~/~-- & 0.233 \\ \hline
\end{tabular}
\end{table*}

\end{appendices}

\end{document}